
\documentclass[a4paper]{scrartcl}

\usepackage[utf8]{inputenc}
\usepackage[T1]{fontenc}
\usepackage{microtype}
\usepackage{lmodern}

\usepackage{graphicx}
\usepackage{xcolor}
\usepackage[fleqn]{amsmath}
\usepackage{amsfonts}
\usepackage{amssymb}
\usepackage{amsthm}
\usepackage{mathtools}
\usepackage{stmaryrd}
\usepackage{bm}
\usepackage{colortbl}
\usepackage{multirow}
\usepackage{import}

\usepackage{tabularx}
\usepackage[strings]{underscore}

\usepackage{authblk}
\usepackage{thm-restate}

\newtheorem{theorem}   {Theorem}[section]
\newtheorem{definition}{Definition}

\newtheorem{lemma}       [theorem]{Lemma}
\newtheorem{corollary}   [theorem]{Corollary}

\theoremstyle{definition}
\newtheorem{example}     [theorem]{Example}

\definecolor{niceredbright}{HTML}{bd0310}
\definecolor{nicebluebright}{HTML}{197b9b}
\definecolor{nicered}{HTML}{7f0a13}
\definecolor{niceblue}{HTML}{104354}
\definecolor{nicegreen}{HTML}{217516}
\definecolor{nicepurple}{HTML}{884bab}
\definecolor{nicebg}{HTML}{f6f0e4}
\definecolor{niceredlight}{HTML}{c9888d}
\definecolor{nicebluelight}{HTML}{78a4b8}
\definecolor{nicegreenlight}{HTML}{76de68}
\definecolor{nicepurplelight}{HTML}{bc87db}

\makeatletter
\RequirePackage[bookmarks,unicode,colorlinks=true]{hyperref}%
   \def\@citecolor{niceblue}%
   \def\@urlcolor{niceblue}%
   \def\@linkcolor{nicered}%

\def\orcidID#1{\smash{\href{http://orcid.org/#1}{\protect\raisebox{-1.25pt}{\protect\includegraphics{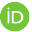}}}}}
\makeatother

\usepackage{etoolbox}
\makeatletter
\pretocmd\start@gather{%
    \if@minipage\kern-\topskip\kern-\baselineskip\kern+7pt\fi
}{}{}
\makeatother

\usepackage{bm}       
\usepackage{stmaryrd} 


\newcommand{\defeq}{\stackrel{\scriptscriptstyle\text{def}}{=}}

\renewcommand{\ldots}{...}

\newcommand{\N}{\mathbb{N}}                         
\def\Z{\mathbb{Z}}                         
\renewcommand{\vec}[1]{#1}                          
\newcommand{\norm}[1]{\lVert#1\rVert}               


\newcommand{\multiset}[1]{\Lbag#1\Rbag}             
\newcommand{\mplus}{\mathbin{+}}                    
\newcommand{\supp}[1]{\llbracket#1\rrbracket}       

\newcommand{\PP}{\mathcal{P}}                       

\renewcommand{\O}{\mathcal{O}}                      


\newcommand{\Tra}{T} 
\newcommand{\BB}{\mathit{BB}} 
\newcommand{\BBL}{\mathit{BB}_L} 
\newcommand{\SC}{\mathit{SC}} 
\newcommand{\InC}[1]{\mathit{IC}({#1})} 
\newcommand{\BBP}{\textbf{BBP}} 
\newcommand{\All}{\mathit{All}} 
\newcommand{\Prot}{\textbf{PP}}
\newcommand{\STC}{\mathit{STATE}} 
\renewcommand{\defeq}{:=}
\newcommand{\Conf}{C}
\newcommand{\Confset}{{\cal C}} 
\newcommand{\Conff}{C'}

\newtheorem{remark}{Remark}

\definecolor{nicered}{HTML}{7f0a13}
\definecolor{nicepurple}{HTML}{884bab}
\newcommand{\trans}[1]{\mathrel{\raisebox{-1pt}[10pt][0pt]{$\xrightarrow{#1}$}}}
\newcommand{\transS}{\mathrel{\raisebox{0pt}[6pt][0pt]{$\xrightarrow{\raisebox{-1pt}[0pt][-1pt]{$\scriptstyle *$}}$}}}
\newcommand{\potrans}[1]{\mathrel{\raisebox{-1pt}[10pt][0pt]{$\xRightarrow{#1}$}}}
\newcommand{\potransS}{\mathrel{\raisebox{0pt}[6pt][0pt]{$\xRightarrow{\raisebox{-1pt}[0pt][-1pt]{$\scriptstyle *$}}$}}}
\newcommand{\Cardinality}[1]{\mathopen\vert#1\mathclose\vert}
\newcommand{\size}[1]{\Cardinality{#1}}
\newcommand{\abs}[1]{\Cardinality{#1}}

\newcommand{\Abs}[1]{\Cardinality{#1}}
\newcommand{\Basel}{B}
\newcommand{\Basecount}{\vartheta}
\newcommand{\LargeC}{D}
\newcommand{\Confa}{D_a}
\newcommand{\Confb}{D_b}
\newcommand{\setN}{\N}

\newcommand{\Displ}[1]{\Delta_{#1}}
\newcommand{\pottcons}{2 (2 \size{T} + 1)^{\size{Q}}}
\newcommand{\pottname}{\xi}

\newcommand{\Pottbase}{{\cal B}}


\usepackage[shortlabels]{enumitem}
\setlist[enumerate,1]{itemsep=0pt,topsep=1ex,before={\pagebreak[1]}}
\setlist[itemize,1]{itemsep=0pt,topsep=1ex}

\newenvironment{appendices}{\appendix}{}

\begin{document}

\title{Lower Bounds on the State Complexity of Population Protocols \footnote{This work was supported by an ERC Advanced Grant (787367: PaVeS), by the Research Training Network of the Deutsche Forschungsgemeinschaft (DFG) (378803395: ConVeY), and by the grant ANR-17-CE40-0028 of the French National Research Agency ANR (project BRAVAS).} \footnote{A previous version of this paper appeared in the proceedings of PODC~2021~\cite{CE21}. The full version of that paper can be found as \texttt{v2} at~\cite{czerner2021lower}}}

\author{Philipp Czerner$^1$ \orcidID{0000-0002-1786-9592}, Javier Esparza$^1$ \orcidID{0000-0001-9862-4919}, Jérôme Leroux$^2$ \orcidID{0000-0002-7214-9467}}
\affil{\{czerner, esparza\}@in.tum.de, jerome.leroux@labri.fr\\
$^1$ Department of Informatics, TU München, Germany\\
$^2$ LaBRI, CNRS, Univ. Bordeaux, France}
\maketitle              

\begin{abstract}
Population protocols are a model of computation in which an arbitrary number of
indistinguishable finite-state agents interact in pairs. The goal of the agents is to decide by stable consensus 
whether their initial global configuration satisfies a given property, specified as a predicate on the set of configurations. The state complexity of a predicate is the number of states of a smallest protocol that computes it. Previous work by Blondin \textit{et al.} has shown that the counting predicates $x \geq \eta$ have state complexity $\O(\log \eta)$ for leaderless protocols and $\O(\log \log \eta)$ for protocols with leaders. We obtain the first non-trivial lower bounds: the state complexity of $x \geq \eta$ is $\Omega(\log\log \eta)$ for leaderless protocols, and the inverse of a non-elementary function for protocols with leaders.

\end{abstract}


\section{Introduction} \label{sec:intro}
Population protocols are a model of computation in which an arbitrary number of
indistinguishable finite-state agents interact in pairs to decide if their initial global configuration satisfies
a given property. Population protocols were introduced in \cite{AngluinADFP04,AngluinADFP06}  to study the theoretical properties networks of mobile sensors with very limited computational resources, but they are also very strongly related to chemical reaction networks, a discrete model of chemistry in which agents are molecules that change their states due to collisions.

Population protocols decide a property by \emph{stable consensus}. Each state of an agent is assigned a binary output (yes/no). In a correct protocol, all agents eventually reach the set of states whose output is the correct answer to the question ``did our initial configuration satisfy the property?'', and stay in it forever. An example of a property decidable by population protocols is majority: initially agents are in one of two initial states, say $A$ and $B$, and the property to be decided is whether the number of agents in $A$ is larger than the number of agents in $B$. In a seminal paper, Angluin \textit{et al.} showed that population protocols can decide exactly the properties expressible in Presburger arithmetic, the first-order theory of addition \cite{AngluinAER07}. 

Assume that at each step a pair of agents is selected uniformly at random and allowed to interact. The \emph{parallel runtime} is defined as the expected number of interactions until a stable consensus is reached (i.e.\ until the property is decided), divided by the number of agents. Even though the parallel runtime is computed using a discrete model, under reasonable and commonly accepted assumptions the result coincides with the runtime of a continuous-time stochastic model. Many papers have investigated the parallel runtime of population protocols, and several landmark results have been obtained. In \cite{AngluinADFP06} it was shown that every Presburger property can be decided in $\O(n\log n)$ parallel time, where $n$ is the number of agents, and \cite{AngluinAE08a} showed that  population protocols with a fixed number of leaders can compute all Presburger predicates in polylogarithmic parallel time. (Loosely speaking, leaders are  auxiliary agents that do not form part  of the population of ``normal'' agents, but can interact with them to help them decide the property.)  More recent results have studied protocols for majority and leader election in which the number of states grows with the number of agents, and shown that polylogarithmic time is achievable by protocols without leaders, even for very slow growth functions, see e.g.~\cite{AAEGR17,AlistarhAG18,AlistarhG18,ElsasserR18,GS20}.

\newcommand{\Flen}{l}
However, many protocols have a high number of states. For example, a quick estimate shows that the fast protocol for majority 
implicitly described in \cite{AngluinAE08a} has tens of thousands of states. This is an obstacle to implementations of protocols in chemistry,
where the number of states corresponds to the number of chemical species participating in the reactions. The number of states is also important because it plays the role of memory in sequential computational models; indeed, the total memory available to a protocol is the logarithm of the number of states multiplied by the number of agents. Despite these facts, the \emph{state complexity} of a Presburger property, defined as the minimal number of states of any protocol deciding the property, has received comparatively little attention\footnote{Notice that the time-space trade-off results of \cite{AAEGR17,AlistarhAG18,AlistarhG18,ElsasserR18,GS20} refer to a more general model in which the number of states of a protocol grows with the number $n$ of agents; in other words, a property is decided by a \emph{family} of protocols, one for each value of $n$. Trade-off results bound the growth rate needed to compute a  predicate within a given time. We study the minimal number of states of a \emph{single} protocol that decides the property for \emph{all} $n$.}. In \cite{BlondinEJ18,BlondinEGHJ20} Blondin \textit{et al.} have shown that every predicate representable by a boolean combination of threshold and modulo constraints (every Presburger formula can be put into this form), with numbers encoded in binary, can be decided by a protocol with polynomially many states in the length of the formula. In particular, it is not difficult to see that every property of the form $x \geq \eta$, stating that the number of agents is at least~$\eta$, can be decided by a leaderless protocol with $\O(\log \eta)$ states. A theorem of \cite{BlondinEGHJ20} also proves the existence of an infinite family of thresholds $\eta$ such that $x \geq \eta$ can be decided by a protocol (with leaders) having $\O(\log\log \eta)$ states. However, to the best of our knowledge there exist no \emph{lower} bounds on the state complexity, i.e.\ bounds showing that a protocol for $x \geq \eta$ needs $\Omega(f(\eta))$ states for some function $f$. This question, which was left open in \cite{BlondinEJ18}, is notoriously hard due to its relation to fundamental questions in the theory of Vector Addition Systems.

In this paper we first show that every protocol, with or without leaders, needs a number of states that, roughly speaking, grows like the inverse Ackermann function, and then prove our main result: every leaderless protocol for $x \geq \eta$ needs $\Omega(\log\log \eta)$ states. The proof of the first bound relies on results on the maximal length of controlled antichains of $\N^d$, a topic in combinatorics with a long tradition in the study of Vector Addition Systems and other models, see e.g. \cite{McAloon84,FigueiraFSS11,AbriolaFS15,Schmitz16,Balasubramanian20}. The double logarithmic bound follows from Pottier's small basis theorem, a useful result of the theory of Diophantine equations \cite{Pottier91}.

The paper is organised as follows. Section~\ref{sec:pre} introduces population protocols, the state complexity function, and its inverse, the busy beaver function, which assigns to a number of states $n$ the largest $\eta$ such that a protocol with $n$ states decides $x \geq \eta$. Instead of lower bounds on state complexity, we present upper bounds on the busy beaver function for convenience. Section~\ref{ssec:stablesets} presents some results on the mathematical structure of stable sets of configurations that are used throughout the paper. Section~\ref{sec:generalbound} shows an Ackermannian upper bound on the busy beaver function, valid for protocols with or without leaders, and explains why this very large bound might be optimal. Section \ref{sec:leaderless bound} gives a triple exponential upper bound on the busy beaver function for leaderless protocols.

\section{Population Protocols and State Complexity} \label{sec:pre}


\subsection{Mathematical preliminaries}
For sets $A,B$ we write $A^B$ to denote the set of functions $f \colon B \rightarrow A$. If $B$ is finite we call the 
elements of $\N^B$ \emph{multisets} over $B$. 
We sometimes write multisets using set-like notation, e.g. $\multiset{a, b, b}$ and $\multiset{a, 2 \cdot b}$
denote the multiset $m$ such that $m(a) = 1$, $m(b) = 2$ and $m(c) = 0$ for every $c \in B \setminus \{a,b\}$.
Given a multiset $m \in \N^B$ and $B' \subseteq B$, we define $m(B') := \sum_{b \in B'} m(b)$. The \emph{size} of $m$ is $\size{m}:= m(B)$; in other words, the total number of elements of $m$. The \emph{support} of $m$ is the set $\supp{m} = \{b \in B \mid mb) > 0\}$.  Abusing language we identify an element $b\in B$ with the one-element multiset containing it, i.e.\ with the multiset $m \in \N^B$ given by $m(b)=1$ and $m(b')=0$ for $b' \ne b$. 

We call the elements of $\Z^B$ \emph{vectors} over $B$ of dimension $\size{B}$. Observe that every multiset is also a vector.  Arithmetic operations on vectors in $\Z^B$ are defined as usual, extending the vectors with zeroes if necessary. For example, if $B' \subseteq B$, $u \in \Z^B$, and $v \in \Z^{B'}$, then $u + v \in \Z^B$ is defined by $(u + v)(b) = u(b) + v(b)$, where $v(b) = 0$ for every $b \in B\setminus B'$. For $u,v\in \Z^B$ we write $u \le v$ if $u_i\le v_i$ for all $i\in B$, and $u \lneqq v$ if $u\le v$ and $u\ne v$. Given a vector $\vec{v} \in \Z^k$, we define $\norm{\vec{v}}_1 = \sum_{i=1}^k \abs{\vec{v}_i}$ and $\norm{\vec{v}}_\infty = \max_{i=1}^k \abs{\vec{v}_i}$.

\subsection{Population protocols}
We recall the population protocol model of \cite{AngluinADFP06}, with explicit mention of leader agents.
A \emph{population protocol} is a tuple $\PP = (Q, \Tra, L, X, I, O)$ where 
\begin{itemize}
\item $Q$ is a finite set of \emph{states}; 
\item $\Tra \subseteq Q_2 \times Q_2$ is a set of \emph{transitions}, where $Q_2$ denotes the set of multisets over $q$ of size $2$;
\item $L \in \N^Q$ is the \emph{leader multiset}; 
\item $X$ is a finite set of \emph{input variables};  
\item $I \colon X \to Q$ is the \emph{input mapping}; and
\item $O \colon Q \to \{0, 1\}$ is the \emph{output mapping}.
\end{itemize}
\noindent  We write $p, q~\mapsto~p', q'$ to denote that the pair $(\multiset{p, q}, \multiset{p', q'})$ is a transition. We assume that for every multiset $\multiset{p, q}$ there is at least one transition of the form $p, q~\mapsto~p', q'$.


\smallskip\noindent\textbf{Inputs and configurations.} 
An \emph{input} to $\PP$ is a multiset $m \in \N^X$ such that $\size{m} \geq 2$. A \emph{configuration} is a 
multiset $\Conf \in \N^Q$ such that $\size{\Conf} \geq 2$. 
Intuitively, a configuration represents a population of agents where $\Conf(q)$ denotes the number of agents in
state $q$.  The \emph{initial configuration} for
input $\vec{m}$ is defined as  
$$\InC{\vec{m}} \defeq L \mplus \sum_{x \in X} \vec{m}(x) \cdot I(x) \ . $$ 
When $\PP$ has a unique input $x$, i.e.\ $X =\{x\}$, we abuse language and write $\InC{i}$ instead of  $\InC{i \cdot x}$ to denote the initial configuration for input $i\in\N$.

The \emph{output} $O(\Conf)$ of a
configuration $\Conf$ is $b$ if $\Conf(q) \geq 1$ implies $O(q) =b$ for all $q\in Q$,
and undefined otherwise.  So a population has output $b$ if all agents 
have output $b$. 

\smallskip\noindent\textbf{Executions.}  
A transition $t = \, p, q \mapsto p', q'$ is \emph{enabled} 
at a configuration $\Conf$ if $\Conf \geq p+q$, and \emph{disabled}
otherwise. As $\size{\Conf} \ge 2$ by the definition of configuration, every configuration enables at least one transition.
If $t$ is enabled at $\Conf$, then it can be \emph{fired}
leading to configuration $\Conff :=\Conf-p-q+p'+q'$,
which we denote $\Conf \trans{t} \Conff $.  We write $\Conf \trans{} \Conff $ if $\Conf \trans{t} \Conff $ for some $t \in T$. Given a sequence $\sigma=t_1 t_2 \ldots t_n$
of transitions, we write $\Conf \trans{\sigma} \Conff $ if there exist configurations
$\Conf_1, \Conf_2, \ldots, \Conf_n$ such that $\Conf \trans{t_1} \Conf_1 \trans{t_2} \Conf_2 \cdots \Conf_n  \trans{t_n} {\Conf}'$,
and $\Conf \transS \Conf'$ if $\Conf \trans{\sigma} {\Conf}'$ for some sequence $\sigma \in \Tra ^*$.
For every set of transitions ${\Tra}' \subseteq \Tra$, we
write $\Conf \trans{{\Tra}'} {\Conf}'$ if $\Conf \trans{t} {\Conf}'$ for some $t \in {\Tra}'$; we write  $\Conf \trans{{\Tra}'^*} {\Conf}'$, and say that $\Conf'$ is \emph{reachable} from $\Conf$,
if $\Conf \trans{\sigma} {\Conf}'$ for some sequence $\sigma \in {\Tra}'^*$.
Given a set $\Confset$ of configurations, $\Conf \transS \Confset$ denotes that 
$\Conf \transS {\Conf}'$ for some ${\Conf}' \in \Confset$.

An \emph{execution} is a sequence of configurations $\sigma = \Conf_0 \, \Conf_1 \, \ldots $
such that $\Conf_i \trans{} \Conf_{i+1}$ for every $i \in \N$. 
The \emph{output} $O(\sigma)$  of $\sigma$ is $b$ if there exist $i \in \N$
such that $O(\Conf_i) = O(\Conf_{i+1}) = ... = b$,  otherwise $O(\sigma)$
is undefined.
 
Executions have the \emph{monotonicity property}: If $\Conf_0 \, \Conf_1 \, \Conf_2 \ldots$
is an execution, then for every configuration $D$ the sequence $(\Conf_0 +\Conf) \, (\Conf_1 + \Conf) \, (\Conf_2 + \Conf) \ldots$
is an execution too. We often say that a statement  holds ``by monotonicity'' , meaning that it is 
a consequence of the monotonicity property.

\smallskip\noindent\textbf{Computations.} An execution $\sigma=\Conf_0 \, \Conf_1 \ldots$ is \emph{fair} if for every
configuration $\Conf$ the following holds: if $\Conf$ is reachable from $\Conf_i$ for infinitely many 
$i \in \N$, then $\Conf_j = \Conf$ for infinitely many $j \in \N$.
In other words, fairness
ensures that an execution cannot avoid a reachable configuration forever. We say
that a population protocol \emph{computes} a predicate $\varphi \colon
\N^X \to \{0, 1\}$ (or \emph{decides} the property represented by the predicate) if for every $\vec{v} \in \N^X$ every fair execution $\sigma$ starting from $\InC{\vec{v}}$ satisfies $O(\sigma) = \varphi(\vec{v})$. Two protocols are \emph{equivalent} if they
compute the same predicate. It is known that population protocols compute precisely the 
Presburger-definable predicates~\cite{AngluinAER07}. 

\begin{example}\label{ex:flock}
\newcommand{\Idx}{k}
  Let $\PP_\Idx = (Q, \Tra , \vec{0}, \{x\}, I, O)$ be the protocol where $Q
  \defeq \{0, 1, 2, 3, ..., 2^\Idx\}$, $I(x) \defeq 1$, $O(a) = 1$ if{}f $a = 2^\Idx$, and the set $\Tra $ of transitions contains $a,b \mapsto 0,a+b$  if $a+b<2^\Idx$, and  $a,b \mapsto 2^\Idx, 2^\Idx$ if $a
  + b \geq 2^\Idx$ for every $a, b \in Q$. It is readily seen that $\PP_\Idx$ computes $x \geq 2^\Idx$ with $2^\Idx+1$ states. Intuitively, each agent stores a number, initially 1. When two agents meet, one of them stores the sum of
  their values and the other one stores 0, with sums capping at
  $2^\Idx$. Once an agent reaches $2^\Idx$, all agents eventually get
  converted to~$2^\Idx$.

  Now, consider the protocol $\PP'_\Idx = (Q', \Tra ', \vec{0}, \{x\}, I',
  O')$, where $Q' \defeq \{0, 2^0, 2^1, ..., 2^\Idx\}$, $I'(x) \defeq
  2^0$, $O'(a) = 1$ if{}f $a = 2^\Idx$, and $\Tra '$ contains $2^i, 2^i \mapsto 0, 2^{i+1}$
  for each $0 \leq i < \Idx$, and $a, 2^\Idx \mapsto 2^\Idx, 2^\Idx$ for each $a \in Q'$.  It is easy to see
   that $\PP'_\Idx$ also computes $x \geq 2^\Idx$, but more succinctly; while $\PP_\Idx$ has $2^\Idx + 1$ states,
  $\PP'_\Idx$ has only $\Idx + 1$ states.
\end{example} 

\smallskip\noindent\textbf{Leaderless protocols.} A protocol $\PP = (Q, \Tra , L, X, I, O)$ has a multiset $L$ of leaders. If $L=0$, then the protocol is \emph{leaderless}. Protocols with leaders and leaderless protocols compute the same predicates \cite{AngluinAER07}. 
For $L=0$ we have 
$$\InC{\lambda \vec{v} + \lambda' \vec{v}'} = \lambda  \InC{\vec{v}} +  \lambda' \InC{\vec{v}'}$$
\noindent for all inputs $\vec{v}, \vec{v}' \in \N^X$ and  $\lambda, \lambda' \in \N$. In other words, any linear combination of initial configurations with natural coefficients is also an initial configuration. 


\subsection{State complexity of population protocols} 
Informally, the state complexity of a predicate is the minimal number of states of the protocols that compute it.
We would like to define the state complexity function as the function that assigns to a number $\ell$ the maximum state complexity of the predicates of size at most $\ell$. However, defining the size of a predicate requires to fix a representation. Population protocols compute exactly the predicates expressible in Presburger arithmetic \cite{AngluinAER07}, and so there are at least three natural representations: formulas of Presburger arithmetic, existential formulas of Presburger arithmetic, and semilinear sets \cite{Haase18}. Since the translations between these representations involve superexponential blow-ups,  we focus on threshold predicates of the form $x \geq \eta$, for which the size of the predicate is the size of $\eta$, independently of  
the representation. We choose to encode numbers in unary, and so we define $\STC(\eta)$ as the number of states of the smallest protocol computing $x \geq \eta$.

The inverse of $\STC(\eta)$ is the function that assigns to a number $n$ the largest $\eta$ such that a protocol with $n$ states computes $x \geq \eta$. Recall  that the busy beaver function assigns to a number $n$ the largest $\eta$ such that a Turing machine with $n$ states started on a blank tape writes $\eta$ consecutive ones on the tape and terminates. Due to this analogy, we call the inverse of the state complexity function the \emph{busy beaver function}, and call protocols computing predicates of the form $x \geq \eta$ busy beaver protocols, or just \emph{busy beavers}.

\begin{definition}
The \emph{busy beaver function} $\BB \colon \N \rightarrow \N$ is defined as follows: $\BB(n)$ is the largest $\eta \in \N$ such that the predicate $x \geq \eta$ is computed by some leaderless protocol with at most $n$ states. The function $\BBL(n)$ is defined analogously, but for general protocols, possibly with leaders.
\end{definition}

In \cite{BlondinEJ18} Blondin \textit{et al.} give lower bounds on the busy beaver function:

\begin{theorem}[\cite{BlondinEJ18}]
For each number of states $n$: $\BB(n) \in \Omega(2^n)$ and $\BBL(n) \in \Omega(2^{2^n})$.
\end{theorem}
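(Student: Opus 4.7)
The plan is to handle the two lower bounds separately.

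For the leaderless bound $\BB(n) \in \Omega(2^n)$, no new construction is needed: this follows immediately from Example~\ref{ex:flock}. The protocol $\PP'_k$ exhibited there has $k+1$ states, leader multiset $\vec{0}$ (hence is leaderless), and computes $x \geq 2^k$. Setting $n = k+1$ yields $\BB(n) \geq 2^{n-1}$, which is $\Omega(2^n)$.

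For the leader bound $\BBL(n) \in \Omega(2^{2^n})$, the idea is to exploit the fact that a small number of leader states can sequentialise and iterate the doubling trick from Example~\ref{ex:flock}. My plan would be to construct a protocol with $O(n)$ states in which a single leader runs through $\Theta(n)$ phases. During each phase the non-leader agents perform a full round of doubling using states analogous to $\{0, 2^0, 2^1, \ldots, 2^{\Theta(n)}\}$; when a top agent is produced, the leader consumes it, advances to the next phase, and the non-leaders are reset. The transitions are arranged so that the population needed to reach phase $i+1$ is (roughly) the square of the population needed to reach phase $i$, so that the threshold for reaching the final leader state is doubly exponential in the number of states. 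Once the final leader state is reached, a single transition triggers the global conversion to the accepting output.

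The main obstacle is stability under fairness: one must show that below the threshold no fair execution ever stabilises to the accepting output, while above it every fair execution does. The danger is that spurious partial configurations -- e.g. a few non-leader agents that happen to land in a "top" state by chance -- could trick the leader into advancing phases prematurely and certifying a false positive. The heart of the construction is therefore the reset and phase-transition gadgets, which must guarantee that every leader advance is actually witnessed by the full doubled count, and that any partial progress on a failed doubling attempt can always be "undone" by enabled transitions so that fairness does not force premature acceptance. Since this theorem is due to Blondin \etal~\cite{BlondinEJ18}, I would defer to their specific construction for these details; the sketch above captures only the high-level strategy.
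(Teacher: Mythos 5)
The paper states this theorem as an imported result from Blondin \textit{et al.}~\cite{BlondinEJ18} and offers no proof of its own, so the only internal material to compare against is Example~\ref{ex:flock}.

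Your argument for $\BB(n) \in \Omega(2^n)$ is correct and is exactly what the paper expects the reader to extract from that example: $\PP'_k$ is leaderless, has $\Theta(k)$ states, and computes $x \geq 2^k$, which immediately gives the single-exponential lower bound. (Minor nit: $Q' = \{0, 2^0, \dots, 2^k\}$ has $k+2$ elements, not $k+1$ as the example text says, but this is immaterial to the asymptotics.)

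For $\BBL(n) \in \Omega(2^{2^n})$, however, what you have written is not a proof. You correctly sketch the natural idea of a leader sequencing $\Theta(n)$ phases of repeated squaring/doubling, and you correctly identify where all the difficulty lives: ruling out spurious top agents that could trick the leader into advancing, and making failed doubling attempts undoable so that fairness cannot force a false accept. But you then defer precisely that analysis to the reference. Since the stability argument under fairness \emph{is} the entire technical content of the doubly exponential bound, what remains is an accurate description of the obstacles rather than a resolution of them; a complete proof would need to exhibit the concrete transitions and carry out the stability analysis, or at least verify the specific construction from~\cite{BlondinEJ18,BlondinEGHJ20} state by state. One further point worth flagging: the result the present paper actually attributes to~\cite{BlondinEGHJ20} is the existence of an \emph{infinite family} of thresholds $\eta$ decidable with $\O(\log\log\eta)$ states, so to turn this into an $\Omega(2^{2^n})$ lower bound on $\BBL(n)$ for all $n$ (rather than along a subsequence) one also needs the family to be sufficiently dense; your sketch, and the paper's phrasing, both gloss over this.
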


However, to the best of our knowledge no upper bounds have been given. 

\section{Mathematical Structure of Stable Sets} \label{ssec:stablesets}
We define the \emph{stable configurations} of a protocol:

\begin{definition}
Let $b \in \{0,1\}$. A configuration $\Conf$ óf a protocol is \emph{$b$-stable} if $O(\Conf')=b$ for every configuration $\Conf'$ reachable from $\Conf$. The set of $b$-stable configurations is denoted $\SC_b$, and we let $\SC = \SC_0 \cup \SC_1$.
\end{definition}

It  follows easily from the definitions that a  population protocol \emph{computes} a predicate $\varphi \colon
\N^X \to \{0, 1\}$ if{}f for every input $\vec{v}$ and configuration $\Conf$ with $\InC{\vec{v}}\transS\Conf$ the condition $\Conf\transS\SC_{\varphi(v)}$ holds.

Moreover, given a protocol computing $\varphi$, $\varphi(v)=b$ for an input $v$ if{}f $\InC{\vec{v}}\transS\SC_{b}$.

A set $\Confset$ of configurations is \emph{downward closed} if $\Conf \in \Confset$ and $\Conff \le \Conf$ implies $\Conff \in \Confset$. The sets $\SC_0$, $\SC_1$, and $\SC$ are downward closed:

\begin{lemma}\label{lem:downwardsclosed}
Let $\PP$ be a protocol with $n$ states. For every $b \in \{0,1\}$ the set $\SC_b$ is downward closed.
\end{lemma}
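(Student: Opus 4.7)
The plan is to reduce the statement directly to the monotonicity property of executions, which was highlighted in the preliminaries.

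Let $\Conf \in \SC_b$ and let $\Conff$ be a configuration with $\Conff \le \Conf$. Set $E \defeq \Conf - \Conff \in \N^Q$. To show that $\Conff \in \SC_b$, I must prove that every configuration $\Conff'$ reachable from $\Conff$ satisfies $O(\Conff') = b$. So, fix an arbitrary $\Conff'$ with $\Conff \transS \Conff'$.

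The key step is to lift the execution $\Conff \transS \Conff'$ to an execution starting from $\Conf$. By the monotonicity property stated in Section~\ref{sec:pre} (adding the constant multiset $E$ to each configuration of the execution yields another valid execution), I obtain $\Conf = \Conff + E \transS \Conff' + E$. Since $\Conf$ is $b$-stable, this forces $O(\Conff' + E) = b$, meaning that every state in the support of $\Conff' + E$ has output $b$.

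Now $\supp{\Conff'} \subseteq \supp{\Conff' + E}$, so every state in the support of $\Conff'$ also has output $b$, and therefore $O(\Conff') = b$. As $\Conff'$ was arbitrary, $\Conff \in \SC_b$. The only mild subtlety is that the lifted target $\Conff' + E$ must still be a configuration (\ie\ have size at least $2$), which is automatic because $\size{\Conff' + E} \geq \size{\Conff} + \size{E} = \size{\Conf} \geq 2$; no real obstacle arises. The argument is symmetric in $b \in \{0,1\}$, so the same proof works for both $\SC_0$ and $\SC_1$, and hence also for $\SC = \SC_0 \cup \SC_1$.
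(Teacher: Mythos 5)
Your proof is correct and uses essentially the same idea as the paper: lift an execution from $\Conff$ to one from $\Conf = \Conff + E$ via monotonicity, then use $b$-stability of $\Conf$ to constrain the supports. The only cosmetic difference is that you argue directly while the paper phrases it as a proof by contradiction.
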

\begin{proof}
Assume $\Conf \in \SC_b$ and $\Conf' \le \Conf$. We prove $\Conf' \in \SC_b$ by contradiction, so assume that $\Conf' \transS \Conf''$ for some $\Conf''$ such that $O(\Conf'')\neq b$. By monotonicity, $\Conf = \Conf'+ (\Conf-\Conf') \transS \Conf'' + (\Conf-\Conf')$, and since $O(\Conf'')\neq b$ we have $O(\Conf''+ (\Conf-\Conf')) \neq b$. So $\Conf' \in \SC_b$.
%
\end{proof}

Given a downward closed set $\Confset$,  a pair $(\Basel,S)$, where $\Basel$ is a configuration and $S \subseteq Q$,  is a \emph{basis element} of $\Confset$ if $\Basel+\N^S \subseteq \Confset$. A \emph{base} of $\Confset$ is a \underline{finite} set $\mathcal{B}$ of \emph{basis elements} such that $\Confset = \bigcup_{(\Basel,S) \in \mathcal{B}} ( \Basel+\N^S)$.  We define the \emph{norm} of a basis element $(\Basel,S)$ as $\norm{(\Basel, S)}_\infty := \norm{\Basel}_\infty$, and the norm of a basis as the maximal norm of its elements.

It is well-known that every downward-closed set of configurations has a base. We prove a stronger result: the sets $\SC_0$, $\SC_1$, and $\SC$ have bases of small norm.

\begin{lemma}\label{lem:basesaresmall}
Let $\PP$ be a protocol with $n$ states. Every $\Confset \in \{\SC_0,\SC_1,\SC\}$ has a basis of norm at most $2^{2(2n+1)!+1}$ with at most $\Basecount(n) := 2^{(2n+2)!}$ elements. 
\end{lemma}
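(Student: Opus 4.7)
The plan is to analyse the upward-closed complement $\overline{\Confset}$, bound the norm of its minimal elements by $N := 2^{2(2n+1)!+1}$, and then construct an explicit basis of $\Confset$ of size at most $2^{(2n+2)!}$ from these minimal elements.

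For $b \in \{0,1\}$, the set $\overline{\SC_b}$ is upward closed as the complement of the downward-closed set $\SC_b$ (Lemma~\ref{lem:downwardsclosed}); concretely, it consists of the configurations from which some configuration containing an agent in a state $q$ with $O(q) \neq b$ is reachable. For $\SC = \SC_0 \cup \SC_1$, the complement $\overline{\SC} = \overline{\SC_0} \cap \overline{\SC_1}$ is upward closed as an intersection of upward-closed sets. By Dickson's lemma each complement has a finite antichain of minimal elements, which I denote $u_1,\ldots,u_k$.

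The main technical step is to show $\|u_i\|_\infty \le N$ for every $i$. Each minimal $u \in \overline{\Confset}$ admits a witness execution certifying reachability of a ``bad'' configuration, and minimality forces every agent of $u$ to be essential to this witness. The specific factorial bound $2^{2(2n+1)!+1}$ would emerge from a delicate combinatorial analysis of these witness paths---plausibly an iterated coverability argument over $O(n)$ nested levels, or a controlled-antichain length bound in dimension of order $2n+1$. For $\overline{\SC}$ one argues analogously, combining witnesses for $\overline{\SC_0}$ and $\overline{\SC_1}$. This is the main obstacle; once $N$ is in hand, the rest is essentially bookkeeping.

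To assemble a basis, I would consider all pairs $(\Basel, S)$ with $S \subseteq Q$, $\Basel(q) \in \{0,\ldots,N-1\}$ for $q \notin S$, and $\Basel(q) = 0$ for $q \in S$, retaining those that satisfy $\Basel + \N^S \subseteq \Confset$; equivalently, those for which every $u_i$ has some coordinate $q \notin S$ with $u_i(q) > \Basel(q)$. Given $C \in \Confset$, choosing $S := \{q : C(q) \geq N\}$ and $\Basel := C|_{Q \setminus S}$ yields a basis element containing $C$: for each $u_i$ a coordinate $q$ with $u_i(q) > C(q)$ exists (since $u_i \not\le C$), and necessarily $q \notin S$, because $q \in S$ would imply $u_i(q) > C(q) \geq N$, contradicting $\|u_i\|_\infty \leq N$. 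The number of pairs is at most $\sum_{S \subseteq Q} N^{|Q \setminus S|} = (1+N)^n \leq 2^n N^n = 2^{2n + 2n(2n+1)!}$, and the difference $(2n+2)! - (2n + 2n(2n+1)!) = 2(2n+1)! - 2n$ is nonnegative for $n \geq 1$, giving at most $2^{(2n+2)!}$ basis elements of norm at most $N - 1 < 2^{2(2n+1)!+1}$.
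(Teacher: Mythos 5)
Your high-level plan is reasonable---pass to the upward-closed complement, bound its minimal elements, then count pairs $(\Basel,S)$---and your counting step at the end is correct. But there is a genuine gap at precisely the place you flag as ``the main obstacle'': you do not actually prove that the minimal elements $u_i$ of $\overline{\Confset}$ have $\norm{u_i}_\infty \le N$. You gesture at ``an iterated coverability argument'' or ``a controlled-antichain length bound,'' but neither is carried out, and this is where the entire factorial constant comes from. Without it the lemma is not proved.

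The missing ingredient is Rackoff's coverability theorem: from any configuration from which a ``bad'' state $q$ (with $O(q)\ne b$) is coverable, it is coverable by a firing sequence $\sigma$ of length at most $\beta := 2^{2(2n+1)!}$. The paper uses this to sidestep minimal elements of the complement entirely. Given an arbitrary $b$-stable $\Conf$, it sets $S := \{q \mid \Conf(q) > 2\beta\}$ and truncates $\Conf$ at $2\beta$ on $S$ to obtain $\Basel$; then for any $\Conf' \in \Basel + \N^S$, a hypothetical short covering sequence from $\Conf'$ removes at most $2\beta$ agents from each state, so it is also fireable from $\Basel$ (which agrees with $\Conf'$ off $S$ and has $2\beta$ agents on each $S$-state) and hence from $\Conf$, contradicting stability of $\Conf$. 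This directly exhibits a basis element of norm $\le 2\beta$ containing $\Conf$, after which the count of pairs $(\Basel,S)$ is the same bookkeeping you did. If you wanted to rescue your route through $\overline{\Confset}$, you would still need Rackoff's theorem to bound the norm of minimal covering configurations; your approach is not wrong, but it is incomplete in exactly the step that carries the mathematical content, and the paper's route is more direct since it never needs to reason about minimality in the complement.
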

\begin{proof}
For the bound on the norm, let $\beta:=2^{2(2n+1)!}$ and fix a $b$-stable configuration $\Conf$. Let $S:=\{q\in Q \mid \Conf(q)>2\beta\}$, and define $\Basel \leq \Conf$ as follows: $\Basel(i):=\Conf(i)$ for $i\notin S$ and $\Basel(i):=2\beta$ for $i\in S$. Since $\Basel \leq \Conf$ and $\Conf$ is $b$-stable, so is $\Basel$. We show that $(\Basel, S)$ is a basis element of $\SC_b$, which proves the result for $\SC_0$ and $\SC_1$.  Assume the contrary. Then some configuration $\Conf'\in\Basel+\N^S$ is not $b$-stable. So $\Conf' \transS \Conf''$ for some $\Conf''$ satisfying $\Conf''(q) \geq 1$ for some state $q\in Q$ with $O(q) \neq b$; we say that $\Conf''$ covers $q$.

By Rackoff's Theorem~\cite{rackoff78covering}, $\Conf''$ can be chosen so that $\Conf' \trans{\sigma} \Conf''$ for a sequence $\sigma$ of length $2^{2^{\O(n)}}$; a more precise bound is $\Cardinality{\sigma} \leq \beta$ (see Theorem~3.12.11 in \cite{esparza2019petri}). Since a transition moves at most two agents out of a given state, $\sigma$ moves at most $2\beta$ agents out of a state. So, by the definition of $\Basel$, the sequence $\sigma$ is also executable from $\Basel$, and also leads to a configuration that covers $q$. But this contradicts that $\Basel$ is $b$-stable. This concludes the proof for $\SC_0$ and $\SC_1$. For $\SC$, just observe that the union of the bases of $\SC_0$ and $\SC_1$ is a basis of $\SC$.

To prove the bound on the number of elements of the bases, observe that the number of pairs $(\Basel, S)$ 
such that $\Basel$ has norm at most $k$ and $S \subseteq Q$ is at most $(k+2)^n$. Indeed, for each state $q$ there are at most $k+2$ possibilities: $q \in S$, or $q \notin S$ and $0 \leq \Basel(q) \leq k$. So $\Basecount \leq  (2^{2(2n+1)!+1}+2)^n \leq  2^{(2n+2)!}$.
\end{proof}

From now on we use the following terminology:

\begin{definition}
\label{def:smallbasisconstant}
We call $\beta := 2^{2(2n+1)!+1}$ the \emph{small basis constant} for the protocol $\PP$.
A \emph{small basis} of $\SC_b$ or $\SC$ is a basis of norm at most $\beta$ (guaranteed to exist by Lemma \ref{lem:basesaresmall}). Its elements are called \emph{small basis elements}. 
\end{definition}


\section{A General Upper Bound on the Busy Beaver Function} \label{sec:generalbound}

We obtain a bound on the busy-beaver function $\BB_L(n)$.

Fix a protocol $\PP_n = (Q, \Tra , L, \{x\}, I, O)$ with $n$ states computing a predicate $x \geq \eta$. Observe that the unique input state is $x$, and so $\InC{a} = a \cdot x + L$ for every input $a$. 

Observe that for every input $i$ we have $\InC{i} \transS \Conf_i$ for some configuration $\Conf_i \in \SC$, and so $\Conf_i \in \Basel_i+ \N^{S_i}$ for some basis element $(\Basel_i, S_i)$ of $\SC$.  If $i < \eta$, then $\Conf_i \in \SC_0$, and if $i \geq \eta$ then $\Conf_i \in \SC_1$. Lemma \ref{lem:pumping} below uses this observation to provide a sufficient condition for an input $a$ to lie above $\eta$. The rest of the section shows that for a protocol with $n$ states some number $a < f(n)$ satisfies the condition, where $f(n)$ is a function from the Fast Growing Hierarchy \cite{FigueiraFSS11}. While the function $f(n)$ grows very fast, it is a recursive function. So, contrary to Turing machines, the busy-beaver function for population protocols does not grow faster than any recursive function.


\begin{lemma}
\label{lem:pumping}
If there exist $a,b \in \N$,  a basis element $(\Basel, S)$ of $\SC$, and configurations  $\Confa, \Confb \in \N^S$  satisfying 
\begin{enumerate}
\item $\InC{a} \transS \Basel + \Confa$,  and
\item $b \cdot x \transS \Confb$,
\end{enumerate}
\noindent then $\eta \leq a$.
\end{lemma}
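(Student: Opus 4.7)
The plan is to argue by contradiction: assume $a < \eta$ and pump the second hypothesis on top of the first to reach a $0$-stable configuration from an input whose correct answer is $1$, contradicting correctness of $\PP_n$.

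The first and key step is to lift the basis element $(\Basel,S)$ of $\SC$ to a basis element of $\SC_0$, i.e.\ to show $\Basel + \N^S \subseteq \SC_0$. From the first hypothesis the configuration $\Basel + \Confa$ lies in $\Basel + \N^S \subseteq \SC$, so it is stable. It cannot be $1$-stable: correctness of $\PP_n$ together with $a < \eta$ forces every fair execution starting from $\InC{a}$ to output $0$, and any fair extension of the run witnessing $\InC{a} \transS \Basel + \Confa$ is such a fair execution, so $\Basel + \Confa \in \SC_0$. Lemma~\ref{lem:downwardsclosed} applied to $\Basel \le \Basel + \Confa$ then yields $\Basel \in \SC_0$. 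For any $s \in \N^S$ the basis property gives $\Basel + s \in \SC$, while downward closure of $\SC_1$ together with $\Basel \notin \SC_1$ (since $\SC_0$ and $\SC_1$ are disjoint) rules out $\Basel + s \in \SC_1$. Hence $\Basel + s \in \SC_0$, so $\Basel + \N^S \subseteq \SC_0$.

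Next I would pump the second hypothesis by monotonicity. Iterating $b\,x \transS \Confb$ gives $k b\,x \transS k\,\Confb$ for every $k \in \N$, and combining with the first hypothesis by monotonicity produces
\[ \InC{a+kb} \;=\; \InC{a} + kb\,x \;\transS\; \Basel + \Confa + k\,\Confb, \]
whose right-hand side lies in $\Basel + \N^S \subseteq \SC_0$.

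To conclude, I would pick $k$ large enough that $a + kb \ge \eta$, which is possible whenever $b \ge 1$ (the nontrivial case; transitions preserve total agent count, so $b = |\Confb|$). Correctness of $\PP_n$ at input $a+kb$ then requires every fair execution from $\InC{a+kb}$ to output $1$; yet a fair extension through the $0$-stable configuration $\Basel + \Confa + k\,\Confb$ yields output $0$, a contradiction. Hence $\eta \le a$. The main obstacle is the first step, namely lifting stability from $\SC$ to $\SC_0$ uniformly across $\Basel + \N^S$. The pumping itself is a standard application of monotonicity; the delicate ingredient is to combine disjointness of $\SC_0$ and $\SC_1$ with downward closure of each in order to rule out any $1$-stable configuration inside $\Basel + \N^S$.
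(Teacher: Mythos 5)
Your proof is correct and follows the same route as the paper: pump the second hypothesis by monotonicity to get $\InC{a+\lambda b} \transS \Basel + \Confa + \lambda\Confb$, and derive a contradiction with $a < \eta$. The only difference is that you spell out explicitly—via downward closure of $\SC_1$ and disjointness of $\SC_0$ and $\SC_1$—why $\Basel + \N^S \subseteq \SC_0$ once $\Basel + \Confa \in \SC_0$, a step the paper's proof asserts without elaboration.
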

\begin{proof}
We first claim that $\InC{a + \lambda b} \transS \Basel + \Confa + \lambda \Confb$ holds for every $\lambda \geq 0$.
Observe that 
$$\InC{a + \lambda b}  =  (a + \lambda b) \cdot x + L = \InC{a} + \lambda b \cdot x \ . $$ 
We have:
$$\begin{array}{rclcr}
\InC{a} + \lambda b \cdot x  & \trans{*} &  \Basel + \Confa + \lambda b \cdot x  &\quad \mbox{by (1)}  \\
 & \trans{*}&  \Basel + \Confa + \lambda \Confb &\quad \mbox{by (2)}  
\end{array}$$
\noindent and the claim is proved.

Assume now that $\eta > a$, i.e.\ $\PP_n$ rejects $a$. Since $\Confa \in \N^S$, we have $\Basel + \Confa \in \SC$, and so $\Basel + \Confa \in \SC_0$ because $\PP_n$ rejects $a$. Since $\Confb \in \N^S$, we have  $\Basel + \Confa + \lambda \Confb \in \Basel + \N^S$, and so $\Basel + \Confa + \lambda \Confb \in \SC_0$ for every $\lambda \geq 0$. So $\PP_n$ rejects $a + \lambda b$ for every $\lambda \geq 0$, contradicting that $\PP$ computes $x \geq \eta$.
\end{proof}

We now start our search for a number $a$ satisfying the conditions of Lemma \ref{lem:pumping}. First we identify a sequence of configurations $\Conf_2, \Conf_3, \Conf_4 \ldots$ of $\SC$ satisfying conditions close to 1. and 2. in Lemma \ref{lem:pumping}.

\begin{lemma}
\label{lem:sequence}
There exists a sequence $\Conf_2, \Conf_3, \Conf_4 \ldots$ of configurations of $\SC$ 
satisfying:
\begin{enumerate}[(1)]
\item \label{sequence1} $\InC{i} \transS \Conf_i$ for every $i \geq 2$, and
\item \label{sequence2} $\Conf_{i} + j \cdot x \transS \Conf_{i+j}$ for every $j \geq 0$. 
\end{enumerate}
\end{lemma}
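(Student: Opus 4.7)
My plan is to construct the sequence $(\Conf_i)_{i \ge 2}$ inductively, relying on one key fact: whenever a configuration $\LargeC$ is reachable from some initial configuration $\InC{v}$, a stable configuration in $\SC$ is reachable from $\LargeC$. This follows from correctness of $\PP_n$: any path $\InC{v} \transS \LargeC$ can be extended to a fair execution from $\InC{v}$, and since $\PP_n$ computes a predicate, such an execution must eventually enter and remain in $\SC$, so some element of $\SC$ is reachable from $\LargeC$.

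For the base case, apply this fact to $\LargeC = \InC{2}$ to obtain some $\Conf_2 \in \SC$ with $\InC{2} \transS \Conf_2$. For the inductive step, suppose $\Conf_i \in \SC$ and $\InC{i} \transS \Conf_i$ have already been defined. Since $\InC{i+1} = \InC{i} + x$, monotonicity applied to $\InC{i} \transS \Conf_i$ gives $\InC{i+1} \transS \Conf_i + x$, so $\Conf_i + x$ is reachable from an initial configuration. The key fact then yields some $\Conf_{i+1} \in \SC$ with $\Conf_i + x \transS \Conf_{i+1}$, and chaining gives $\InC{i+1} \transS \Conf_{i+1}$. This establishes condition~(\ref{sequence1}) at index $i+1$, together with the $j = 1$ instance of condition~(\ref{sequence2}) at index $i$.

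Condition~(\ref{sequence2}) for arbitrary $j \ge 0$ will then follow by a secondary induction on $j$. The case $j = 0$ is trivial; for $j \ge 1$, monotonicity applied to $\Conf_i + x \transS \Conf_{i+1}$ gives $\Conf_i + j \cdot x \transS \Conf_{i+1} + (j-1) \cdot x$, and the induction hypothesis at index $i+1$ yields $\Conf_{i+1} + (j-1) \cdot x \transS \Conf_{i+j}$, which combine to $\Conf_i + j \cdot x \transS \Conf_{i+j}$.

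The argument is a short induction, so I expect no real combinatorial obstacle. The only substantive ingredient is the key fact above, which ties the purely reachability-theoretic claim to the semantic assumption that $\PP_n$ correctly decides $x \ge \eta$ on every fair execution; everything else is an application of monotonicity and chaining of $\transS$.
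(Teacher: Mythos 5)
Your proof is correct and follows essentially the same route as the paper: construct $\Conf_{i+1}$ from $\Conf_i + x$ by extending to a fair run that must land in $\SC$, then derive condition~(2) by monotonicity. The only cosmetic difference is that you phrase the telescoping chain $\Conf_i + j\cdot x \transS \Conf_{i+1} + (j-1)\cdot x \transS \cdots \transS \Conf_{i+j}$ as an explicit secondary induction on $j$, whereas the paper writes it directly as a chain.
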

\begin{proof}
Since $\PP_n$ computes $x \geq \eta$,  for every $i \geq 2$ every fair run of $\PP$ starting at $\InC{i}$ eventually reaches $\SC_0$ or $\SC_1$, depending on whether $i< \eta$ or $i \geq \eta$, and stays there forever. We define $\Conf_2, \Conf_3, \Conf_4, \ldots$ as follows. First, we let $\Conf_2$ be any configuration of $\SC$ reachable from $\InC{2}$. Then, for
every $i \geq 2$, assume that $\Conf_i$ has already been defined and satisfies $\InC{i} \transS \Conf_{i}$. Observe that 
$\InC{i+1} = \InC{i} \mplus x$. Since  $\InC{i} \transS \Conf_{i}$, we also have 
$\InC{i+1} = \InC{i} \mplus x \transS \Conf_i \mplus x$.  This execution can be extended to a fair run, which eventually reaches  $\SC$. We let $\Conf_{i+1}$ be any configuration of $\SC$ reachable from $\Conf_i \mplus x$. 

Let us show that $\Conf_2, \Conf_3, \Conf_4 \ldots$ satisfies \ref{sequence1} and \ref{sequence2}. Property \ref{sequence1} holds for $\Conf_2$ by definition, and for $i \geq 2$ because 
$\InC{i+1} = \InC{i} \mplus x \transS \Conf_i\mplus I(x) \transS \Conf_{i+1}$. For property \ref{sequence2}, by monotonicity and the definition of $\Conf_i$ we have for every $2 \leq i \leq k$:
\begin{multline*}
\Conf_{i} + j \cdot x  \transS  \Conf_{i+1} + (j-1) \cdot x \transS \\
\cdots  \transS   \Conf_{i+j-1} + x \transS \Conf_{i+j}
\end{multline*}
\end{proof}

We can now easily prove the existence of a number $a$ satisfying the conditions of Lemma \ref{lem:pumping}. We start by recalling Dickson's Lemma: 

\begin{lemma}[Dickson's lemma]
For every infinite sequence $v_1, v_2, \ldots $ of vectors of the same dimension there is an infinite sequence $i_1 < i_2 <  \ldots$ of indices such that $v_{i_1} \leq v_{i_2} \leq \ldots$.
\end{lemma}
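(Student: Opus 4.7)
The plan is to prove the lemma by induction on the dimension $d$ of the vectors, treating them as elements of $\N^d$ (which is the only relevant case here, since configurations live in $\N^Q$).

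For the base case $d=1$, given an infinite sequence $v_1, v_2, \ldots$ of natural numbers, I would split into two cases. If $\{v_i : i \in \N\}$ is finite, then by the infinite pigeonhole principle some value $c$ is attained by $v_i$ for infinitely many indices $i_1 < i_2 < \ldots$, yielding a constant (hence non-decreasing) subsequence. Otherwise $\{v_i\}$ is unbounded, and I construct the indices greedily: set $i_1 := 1$ and, having chosen $i_k$, pick $i_{k+1} > i_k$ to be any index such that $v_{i_{k+1}} \geq v_{i_k}$, which exists because only finitely many terms of the sequence can be smaller than $v_{i_k}$.

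For the inductive step, I assume the statement holds in dimension $d-1$ and consider an infinite sequence $v_1, v_2, \ldots \in \N^d$. Applying the base case to the sequence of first coordinates $v_1(1), v_2(1), \ldots$ yields indices $j_1 < j_2 < \ldots$ such that $v_{j_1}(1) \leq v_{j_2}(1) \leq \ldots$. Let $w_k \in \N^{d-1}$ denote the projection of $v_{j_k}$ onto the last $d-1$ coordinates. The induction hypothesis applied to $w_1, w_2, \ldots$ provides a further infinite subsequence of indices $k_1 < k_2 < \ldots$ with $w_{k_1} \leq w_{k_2} \leq \ldots$. Setting $i_\ell := j_{k_\ell}$ gives the desired subsequence, since the first coordinates are non-decreasing along the $j_k$ (and thus along the $j_{k_\ell}$), and the remaining coordinates are non-decreasing by construction.

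There is no real obstacle here, as this is a classical combinatorial fact; the only thing to be careful about is that the two selections of indices are performed in the right order (first stabilising coordinate $1$, then applying induction to the projection), so that both monotonicity properties survive simultaneously in the final subsequence.
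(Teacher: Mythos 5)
The paper does not prove this lemma: it is stated as a classical fact (hence the bracketed name ``Dickson's lemma'') and is used without proof. Your argument is the standard coordinate-by-coordinate induction, and it is correct in its overall structure, including the right observation that the result holds over $\N^d$ and that this is the relevant case.

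One small slip in the base case: the claim that ``only finitely many terms of the sequence can be smaller than $v_{i_k}$'' is not true in general. For instance, the sequence $1, 0, 2, 0, 3, 0, \ldots$ has unbounded value set but infinitely many terms below $1$. The greedy construction nevertheless succeeds; the correct justification is that the tail $(v_i)_{i > i_k}$ is still unbounded (deleting an initial segment does not destroy unboundedness), so some index $j > i_k$ satisfies $v_j \geq v_{i_k}$. Alternatively, you can avoid the case split entirely in the base case by repeatedly taking the index of a minimal remaining value, using only that $\N$ is well-ordered.
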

\noindent 
By Dickson's lemma, the sequence $\Conf_2, \Conf_3, \Conf_4 \ldots $ of configurations of $\SC$ constructed in Lemma \ref{lem:sequence} contains an ordered subsequence $\Conf_{i_1} \leq \Conf_{i_2} \leq \Conf_{i_3} \cdots$. Since $\SC$ has a finite basis, by the pigeonhole principle there exist numbers $k < \ell$ and a basis element $(\Basel, S)$ such that $\Conf_{i_k} , \Conf_{i_\ell} \in \Basel + \N^S$. Since $\Conf_k \leq \Conf_\ell$, we have $\Conf_k - \Conf_\ell \in \N^S$, and so we can take:
$$a:=k;  b:= \ell - k; \Confa := \Conf_k - B ; \Confb := \Conf_\ell - \Conf_k \ . $$ 
However, the proof of Dickson's lemma is non-constructive, and gives no bound on the size of $a$. To solve this problem we observe that, in the terminology of \cite{FigueiraFSS11}, the sequence $\Conf_2 \, \Conf_3 \cdots $ is \emph{linearly controlled}:  there is a linear control function $f \colon \N \rightarrow \N$ satisfying $\Cardinality{\Conf_i} \leq f(i)$. Indeed, since  $\InC{i} \transS \Conf_i$, we have $\Cardinality{\Conf_i} = \Cardinality{\InC{i}}=\Cardinality{L} + i$, and so we can take $f(n) = \Cardinality{L} + n$.  This allows us to use a result on linearly controlled sequences from \cite{FigueiraFSS11}. Say a finite sequence $v_0, v_1, \cdots, v_s$ of vectors of the same dimension is \emph{good} if there are two indices $0 \leq i_1 < i_2 \leq s$ such that $v_{i_1} \leq v_{i_2}$.
The maximal length of good linearly controlled sequences has been studied in \cite{McAloon84,FigueiraFSS11,Balasubramanian20}.
In particular, this lemma  follows easily from results of \cite{FigueiraFSS11}:

\begin{lemma}{\cite{FigueiraFSS11}}
\label{lem:fastgrowing}
For every $\delta \in \N$ and for every elementary function $g \colon \N \rightarrow \N$,  there exists a function $F_{\delta,g} \colon \N \rightarrow \N$ at level $\mathcal{F}_{\omega}$ of the Fast Growing Hierarchy satisfying the following property: For every infinite sequence  $v_0, v_1, v_2 \ldots $  of vectors of $\N^n$ satisfying $\size{v_i} \leq i + \delta$, there exist $i_0 < i_1 < \ldots < i_{g(n)} \leq F_{\delta,g}(n)$ such that $v_{i_0} \leq v_{i_1} \leq \cdots \leq v_{g(n)}$.
\end{lemma}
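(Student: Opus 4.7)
The plan is to derive this as an iterated application of the main controlled bad-sequence bound from \cite{FigueiraFSS11}. Recall that a sequence $v_0, v_1, \ldots$ over $\N^n$ with the product order is \emph{bad} if there are no indices $i < j$ with $v_i \leq v_j$. The key result of \cite{FigueiraFSS11} bounds the length of \emph{controlled} bad sequences: for a control $i \mapsto i + \delta$ (the linearly controlled case), the maximal length is an Ackermannian function $A_{\delta}(n)$ lying at level $\mathcal{F}_{\omega}$ of the Fast Growing Hierarchy. So after at most $A_{\delta}(n)$ steps, any linearly controlled sequence must contain some pair of indices $i < j$ with $v_i \leq v_j$.

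First, I would apply this bound to our sequence to produce a single comparable pair $v_{i_0} \leq v_{i_1}$ with $i_1 \leq A_{\delta}(n)$. To extend this to a chain of length $g(n)+1$, the idea is to iterate: once $i_0 < i_1$ have been found, the tail $v_{i_1}, v_{i_1+1}, \ldots$ is still linearly controlled, but with a new additive constant $\delta' := i_1 + \delta \leq A_{\delta}(n) + \delta$, since $\size{v_{i_1+k}} \leq i_1+k+\delta = k+\delta'$. Applying the bound again yields a second comparable pair with an upper index bounded by $A_{\delta'}(n)$. Repeating this $g(n)$ times produces the desired chain $v_{i_0} \leq v_{i_1} \leq \cdots \leq v_{i_{g(n)}}$.

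The resulting overall bound on $i_{g(n)}$ is an elementary-in-$g(n)$-fold composition of $A_{\bullet}(n)$ with itself, where after each step the parameter $\delta$ grows by an Ackermannian amount. Because the Fast Growing Hierarchy is closed at level $\mathcal{F}_{\omega}$ under composition with elementary functions (indeed under any primitive recursive control), this composite is still a function at level $\mathcal{F}_{\omega}$; call it $F_{\delta,g}(n)$. This gives the desired index bound $i_{g(n)} \leq F_{\delta,g}(n)$.

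The main technical obstacle is making the iteration rigorous while keeping the bound inside $\mathcal{F}_{\omega}$: one has to verify that bumping $\delta$ to $\delta' = A_{\delta}(n) + \delta$ and reapplying the \cite{FigueiraFSS11} length function $g(n)$ times does not push us past $\omega$ in the hierarchy. The standard way to handle this is to express the iteration uniformly as $F_{\omega}(p(n,g(n),\delta))$ for a primitive recursive padding $p$, using that $F_{\omega}$ dominates every primitive recursive function and absorbs elementary post-compositions. Since $g$ is elementary and the increment per iteration is itself at level $\mathcal{F}_{\omega}$, the final function $F_{\delta,g}$ remains at level $\mathcal{F}_{\omega}$, as claimed.
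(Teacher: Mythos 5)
The paper does not prove this lemma; it simply cites \cite{FigueiraFSS11} and remarks that the statement ``follows easily from results'' there. So I can only assess your argument on its own terms, and as written it has a genuine gap in the iteration step.

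Your iteration produces $g(n)$ \emph{disjoint} comparable pairs, not a chain of length $g(n)+1$. After locating $v_{i_0}\leq v_{i_1}$ you apply the controlled-bad-sequence bound to the tail $v_{i_1},v_{i_1+1},\ldots$ and obtain some pair $v_{j_0}\leq v_{j_1}$ with $i_1\leq j_0<j_1$. Nothing forces $j_0=i_1$, nor $v_{i_1}\leq v_{j_0}$: the bad-sequence bound only guarantees \emph{some} increasing pair within the prefix, not one anchored at the first element. So the two pairs need not concatenate into $v_{i_0}\leq v_{i_1}\leq v_{j_0}\leq v_{j_1}$, and repeating the step gives you a collection of unrelated increasing pairs rather than an ascending chain. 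This matters in the application (Theorem~\ref{thm:generalbound}): one needs a genuine chain so that, after pigeonholing the $\Basecount(n)+1$ chain elements into the $\Basecount(n)$ basis cones, the two that collide are automatically comparable; a family of disjoint pairs does not yield this.

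The standard fix is not to iterate the single-pair bound but to bound the length of ``$r$-bad'' sequences directly: given a prefix of $v_0,v_1,\ldots$ with no ascending chain of length $r+1$, colour each index $i$ by the length of the longest ascending chain ending at $v_i$ (a value in $\{1,\ldots,r\}$). Indices of the same colour form an antichain, so the coloured sequence is a controlled bad sequence over the disjoint sum of $r$ copies of $\N^n$ (equivalently, $\N^n\times\{1,\ldots,r\}$ ordered by product order on the first component and equality on the second). This WQO has ordinal type below $\omega^{n+1}$, so the length-function results of \cite{FigueiraFSS11} apply directly and, diagonalised over $n$ with $r=g(n)$ elementary in $n$, land at level $\mathcal{F}_\omega$ in one shot. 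This also sidesteps a secondary concern with your approach: a $g(n)$-fold composition of the Ackermannian length function, with the control parameter $\delta$ bumped by an Ackermannian amount at each step, is hard to keep inside $\mathcal{F}_\omega$ when $g$ itself grows with $n$ (it threatens to diagonalise past $\omega$), whereas the single application to the richer WQO avoids any such composition.
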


We do not need the exact definition  of the Fast Growing Hierarchy (see \cite{FigueiraFSS11}); for our purposes it suffices to know that the level $\mathcal{F}_{\omega}$ contains functions that, crudely speaking, grow like the Ackermann function. From this lemma we obtain:

\begin{theorem}
\label{thm:generalbound}
Let $\PP_n$ be a population protocol with $n$ states and $\ell$ leaders computing a predicate $x \geq \eta$ for some $\eta \geq 2$. Then  $\eta < F_{\ell,\Basecount}(n)$, where $\Basecount(n)$ is the function of Lemma \ref{lem:basesaresmall}.
\end{theorem}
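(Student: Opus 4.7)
The plan is to combine the sequence of stable configurations from Lemma~\ref{lem:sequence}, the small basis of $\SC$ from Lemma~\ref{lem:basesaresmall}, and the controlled-sequence bound of Lemma~\ref{lem:fastgrowing} to produce, within a bounded number of steps, a pair of configurations to which Lemma~\ref{lem:pumping} applies and immediately yields an upper bound on~$\eta$.

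Concretely, I first invoke Lemma~\ref{lem:sequence} to obtain the sequence $\Conf_2, \Conf_3, \ldots$ of stable configurations satisfying $\InC{i} \transS \Conf_i$ and $\Conf_i + j \cdot x \transS \Conf_{i+j}$. Since $\size{\Conf_i} = \size{\InC{i}} = \ell + i$, this sequence is linearly controlled with constant at most $\ell$ (modulo a trivial reindexing so the sequence starts at index~$0$). Next, by Lemma~\ref{lem:basesaresmall}, fix a small basis $\mathcal{B}$ of $\SC$ with at most $\Basecount(n)$ elements. Applying Lemma~\ref{lem:fastgrowing} with $\delta := \ell$ and $g := \Basecount$ produces indices $i_0 < i_1 < \cdots < i_{\Basecount(n)} \leq F_{\ell,\Basecount}(n)$ such that $\Conf_{i_0} \leq \Conf_{i_1} \leq \cdots \leq \Conf_{i_{\Basecount(n)}}$.

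Since these $\Basecount(n) + 1$ comparable stable configurations must be distributed among only $\Basecount(n)$ basis elements of $\mathcal{B}$, the pigeonhole principle yields two indices $k < m$ such that $\Conf_{i_k}, \Conf_{i_m} \in \Basel + \N^S$ for some common basis element $(\Basel, S) \in \mathcal{B}$. Set $a := i_k$, $b := i_m - i_k$, $\Confa := \Conf_{i_k} - \Basel$, and $\Confb := \Conf_{i_m} - \Conf_{i_k}$. Both $\Confa$ and $\Confb$ lie in $\N^S$: the former by the defining property of the basis element, and the latter because $\Conf_{i_k} \leq \Conf_{i_m}$ and both belong to $\Basel + \N^S$. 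Hypothesis~(1) of Lemma~\ref{lem:pumping} holds by construction, $\InC{a} \transS \Conf_{i_k} = \Basel + \Confa$, and hypothesis~(2) is witnessed by property~(2) of Lemma~\ref{lem:sequence}, which provides $\Conf_{i_k} + b \cdot x \transS \Conf_{i_k} + \Confb$; this contextual form is enough for the pumping argument by monotonicity. Lemma~\ref{lem:pumping} then delivers $\eta \leq a = i_k \leq F_{\ell,\Basecount}(n)$.

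The main obstacle I expect is matching the exact form of hypothesis~(2) of Lemma~\ref{lem:pumping}: the sequence gives reachability of $\Confb$ from $b \cdot x$ only in the presence of the background $\Conf_{i_k}$, not from $b \cdot x$ in isolation. A careful look at the proof of Lemma~\ref{lem:pumping} shows that the contextual version is sufficient, since by monotonicity the one-step implication $\Basel + \Confa + b \cdot x \transS \Basel + \Confa + \Confb$ can be iterated to obtain $\InC{a + \lambda b} \transS \Basel + \Confa + \lambda \Confb$ by induction on $\lambda$, which is all that the pumping argument requires. The remaining bookkeeping (the off-by-one reindexing at the start of the sequence and the conversion between strict and non-strict inequalities to turn $\eta \leq F_{\ell,\Basecount}(n)$ into $\eta < F_{\ell,\Basecount}(n)$) is routine.
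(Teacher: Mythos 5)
Your proof follows the same route as the paper's: Lemma~\ref{lem:sequence} for the sequence of stable configurations, Lemma~\ref{lem:fastgrowing} (with $\delta := \ell$ and $g := \Basecount$) for the ascending subsequence of length $\Basecount(n)+1$ bounded by $F_{\ell,\Basecount}(n)$, pigeonhole over the $\Basecount(n)$-element small basis from Lemma~\ref{lem:basesaresmall}, and the pumping argument. The one place you go beyond the paper is in noticing that hypothesis~(2) of Lemma~\ref{lem:pumping}, the isolated reachability $b\cdot x \transS \Confb$, is not literally what Lemma~\ref{lem:sequence} delivers: property~(2) of that lemma only gives the contextual $\Conf_{i_k} + b\cdot x \transS \Conf_{i_m}$, and one cannot simply cancel $\Conf_{i_k}$ from both sides. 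The paper's proof invokes Lemma~\ref{lem:pumping} without flagging this. Your fix is correct: the relation $\Basel+\Confa+b\cdot x \transS \Basel+\Confa+\Confb$ iterates by monotonicity to $\InC{a+\lambda b}\transS\Basel+\Confa+\lambda\Confb$ for all $\lambda$, which is exactly the claim the pumping argument needs, so the conclusion $\eta \le i_k$ goes through. (An equivalent way to package your observation is to restate Lemma~\ref{lem:pumping} with the contextual hypothesis $\Basel+\Confa+b\cdot x\transS\Basel+\Confa+\Confb$ in place of (2); its proof is unchanged.) The remaining bookkeeping you defer — the reindexing so the sequence starts at $0$, and the strict inequality via $\eta \le i_k < i_m \le F_{\ell,\Basecount}(n)$ — is indeed routine.
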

\begin{proof}
By Lemma \ref{lem:fastgrowing} there exist $\Basecount(n) +1$ indices
$i_0 < i_1 < \ldots < i_{\Basecount(n)} \leq F_{\ell, \Basecount}(n)$ such that  $\Conf_{i_0} \leq \Conf_{i_1} \leq \cdots \leq \Conf_{i_{\Basecount(n)}}$. By the definition of $\Basecount$ and the pigeonhole principle,
there are indices $k < \ell$ and a basis element $(\Basel, S)$ of $\SC_0$ such that  $\Conf_{i_k}, \Conf_{i_{\ell}} \in \Basel + \N^S$ and $\Conf_{i_k} \leq \Conf_{i_{\ell}}$.  By Lemma \ref{lem:pumping}, $\eta \leq i_k \leq F_{\ell,\Basecount}(n)$.
\end{proof}

\subsection{Is the bound optimal?}
The function $F_{\ell,\Basecount}(n)$ grows so fast that one can doubt that the bound is even remotely close to optimal. However, recent results show that this would be less strange than it seems. If a protocol $\PP$ computes a predicate $x \geq \eta$, then  $\eta $  is the smallest number such that $\InC{\eta} \transS \SC_1$. 
Therefore, letting $\BBP(n)$ denote the busy beaver protocols with at most $n$ states, and letting  $\SC_1^\PP$ and $\mathit{IC}^\PP$ denote the set $\SC_1$ and the initial mapping of the protocol $\PP$, we obtain:
\begin{equation*}
\BB_L(n)  =  \max _{\PP \in \BBP(n)} \min \{ i \in \N \mid \exists \Conf  \in  \SC_1^\PP  \colon \mathit{IC}^\PP\!(i) \transS \Conf \} 
\end{equation*}
\noindent Consider now a deceptively similar function. Let $\All_1$ be the set of configurations $\Conf$ such that $O(\Conf)=1$, i.e.\ all agents are in states with ouput $1$. 
Further, let $\Prot(n)$ denote the set of \emph{all} protocols with alphabet $X=\{x\}$, possibly with leaders, and $n$ states. Notice that we include also the protocols that do not compute any predicate. Define
\begin{equation*}
f(n)   =  \max _{\PP \in \Prot(n)} \min \{ i \in \N \mid \exists \Conf  \in  \All_1^\PP  \colon \mathit{IC}^\PP\!(i) \transS \Conf \} 
\end{equation*}

\noindent Using recent results in Petri nets and Vector Addition Systems \cite{CzerwinskiLLLM21,CO21,Leroux21,HornS20} it is easy to prove that $f(n)$ grows faster than any primitive recursive function\footnote{The paper \cite{HornS20} considers protocols with one leader, and studies the problem of moving from a configuration with the leader in a state $q_{in}$ and all other agents  in another state $r_{in}$, to a configuration with the leader in a state $q_f$ and all other agents in state $r_f$. Combined with \cite{CzerwinskiLLLM21,CO21,Leroux21}, this shows that the smallest number of agents  for which this is possible grows faster than any elementary function in the number of states of the protocol.}.
However,  a recent result \cite{BalasubramanianER21} by Balasubramanian \textit{et al.} shows $f(n) \in 2^{\O(n)}$ for leaderless protocols.

These results suggest that a non-elementary bound on $\BB_L(n)$ might well be optimal. However, in the rest of the paper we prove that this can only hold for population protocols with leaders. We show $\BB(n) \in 2^{2^{\O(n)}}$, i.e.\ leaderless busy beavers with $n$ states can only compute predicates $x \geq \eta$ for numbers $\eta$ at most double exponential in $n$.

\section{An Upper Bound for Leaderless Protocols}
\label{sec:leaderless bound}

Fix a \emph{leaderless} protocol $\PP_n = (Q, \Tra , \emptyset, \{x\}, I, O)$ with $\Abs{Q}=n$ states computing a predicate $x \geq \eta$. Observe that the unique input state is $x$, and so $\InC{a} = a \cdot x$ for every input $a$. We prove that $\eta\leq 2^{(2n+2)!} \in 2^{2^{\O(n)}}$.  We first introduce some well-known notions from the theory of Petri nets and Vector Addition Systems.

\subsection{Potentially realisable multisets of transitions}
\label{subsec:effective}

The \emph{displacement} of a transition $t= p, q \mapsto p', q'$ is the vector $\Displ{t} \in \{-2,-1, 0, 1, 2\}^Q$ given by $\Displ{t} :=p'+q'-p-q$. 
Intuitively, $\Displ{t}(q)$ is the change in the number of agents populating $q$ caused by the execution of $t$.
For example, if $Q = \{p, q, r\}$ and $t= p, q \mapsto p, r$ we have $\Displ{t}(p)=0$, $\Displ{t}(q)=-1$, and $\Displ{t}(r)=1$.
The \emph{displacement} of a multiset $\pi\in\N^T$ is defined as $\Displ{\pi} :=\sum_{t\in T}\pi(t) \cdot \Displ{t}$. 
We use the following notation:

\smallskip

\begin{center} $\Conf \potrans{\pi} \Conf'$ denotes that $\Conf' = \Conf + \Displ{\pi}$. \end{center}

\noindent Intuitively, $\Conf \potrans{\pi} \Conf'$ states that if $\Conf$ enables some sequence $t_1\, t_2 \ldots t_k \in T^*$ such that $\multiset{t_1, \ldots,t_k} = \pi$,
then the execution of $\sigma$ leads to $\Conf'$. However, such a sequence may not exist.  We call the multiset $\multiset{t_1, \ldots,t_k}$ the \emph{Parikh mapping} of  $t_1\, t_2 \ldots t_k$.

Say a configuration $C$ is \emph{$j$-saturated} if $C(q) \geq j$ for every $q \in Q$, i.e.\ if it populates all states with at least $j$ agents.
We have the following relations between $\trans{\sigma}$ and $\potrans{\pi}$:

\begin{lemma}
\label{lem:easy}
\begin{enumerate}[(i)]
\item \label{easy1} If $\Conf \trans{\sigma} \Conf'$ then $\Conf \potrans{\pi} \Conf'$, where $\pi$ is the Parikh mapping of $\sigma$
\item \label{easy2} If $\Conf \potrans{\pi} \Conf'$ and $\Conf$ is $2 \size{\pi}$-saturated, then $\Conf \trans{\sigma} \Conf'$ for any $\sigma$ with Parikh mapping $\pi$.
\end{enumerate}
\end{lemma}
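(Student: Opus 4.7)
Both parts are standard Petri-net-flavored reasoning; the content lies in the right arithmetic bound for part (ii). I would do them as follows.

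For part (i), I would proceed by induction on the length of $\sigma$. The base case $\sigma = \varepsilon$ is immediate since $\Displ{\varepsilon} = 0$. For the inductive step, write $\sigma = \sigma' t$ with $\Conf \trans{\sigma'} \Conf'' \trans{t} \Conf'$; by the inductive hypothesis $\Conf'' = \Conf + \Displ{\pi'}$ where $\pi'$ is the Parikh mapping of $\sigma'$, and by definition of $\trans{t}$ we have $\Conf' = \Conf'' + \Displ{t}$. Since $\Displ{\cdot}$ is linear in the Parikh mapping, $\Displ{\pi' + t} = \Displ{\pi}$, and so $\Conf \potrans{\pi} \Conf'$.

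For part (ii), fix $\sigma = t_1 \cdots t_k$ with Parikh mapping $\pi$, so $k = \size{\pi}$. I would show by induction on $i \leq k$ that the prefix $t_1 \cdots t_i$ is firable from $\Conf$ and leads to a configuration $\Conf_i$ satisfying $\Conf_i \geq \Conf - 2i \cdot \mathbf{1}$, where $\mathbf{1} \in \N^Q$ is the all-ones vector. Indeed, $\Conf_i = \Conf + \sum_{j=1}^i \Displ{t_j}$ (once firability is established, by part (i)), and every transition has $\norm{\Displ{t_j}}_\infty \leq 2$, so in every coordinate the cumulative change is at least $-2i$. For the inductive step, firability of $t_{i+1}$ at $\Conf_i$ requires $\Conf_i(q) \geq \prem{t_{i+1}}(q)$ for all $q$; since $\prem{t_{i+1}}$ is a multiset of size $2$ and $\Conf$ is $2\size{\pi}$-saturated, we have $\Conf_i(q) \geq 2\size{\pi} - 2i \geq 2 \geq \prem{t_{i+1}}(q)$ for all $i < k$. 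Applying this up to $i = k$ gives firability of the whole $\sigma$, and the endpoint equals $\Conf + \Displ{\pi} = \Conf'$ by part (i).

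There is no real obstacle here — the only thing one has to be careful about is matching the constants so that $2\size{\pi}$-saturation suffices at the last step before firing $t_k$ (i.e.\ at $i = k-1$), where the needed bound is $\Conf_{k-1}(q) \geq 2$. This is exactly why the assumption is $2\size{\pi}$-saturation rather than, say, $\size{\pi}$-saturation. The argument does not depend on leaderlessness or on the specific protocol beyond the fact that each transition consumes a multiset of size $2$ and has displacement bounded by $2$ in $L_\infty$-norm.
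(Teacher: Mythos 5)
Both parts are correct and follow essentially the same route as the paper. Part (i) is the same trivial induction on $\size{\sigma}$; for part (ii) the paper phrases the argument as a top-down recursion (peel off the first transition of $\sigma$, observe the successor is $2\size{\pi'}$-saturated for the shorter residue $\pi'$, and invoke the induction hypothesis), while you phrase it as a bottom-up prefix invariant $\Conf_i \geq \Conf - 2i\cdot\mathbf{1}$ — but the key observation driving both is identical, namely that each transition moves at most two agents out of any state, so after $i<\size{\pi}$ steps every state still holds at least two agents and the next transition is enabled.
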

\begin{proof}
\noindent\ref{easy1}: Easy induction on $\size{\sigma}$.\smallskip

\noindent\ref{easy2}:  By induction on $\size{\pi}$. The basis case $\pi=\emptyset$ is trivial. Otherwise, let $\sigma$ be any sequence with Parikh mapping $\pi$. Since this sequence is non empty, it can be decomposed as $t\sigma'$ where $t$ is a transition and $\sigma'$ is a sequence with Parikh mapping $\pi'$ defined by $\pi'(t) = \pi(t) - 1$ and $\pi'(t') = \pi(t')$ for every $t' \neq t$. As $\Conf$ is $2\size{\pi}$-saturated, we have $\Conf \trans{t} \Conf''$ for some configuration $\Conf''$. Further,  $\Conf''$ is $2 \size{\pi'}$-saturated and $\Conf'' \potrans{\pi'} \Conf'$. By induction hypothesis 
$\Conf'' \trans{\sigma'} \Conf'$. It follows that $\Conf\trans{\sigma} \Conf'$, and we are done.
\end{proof}

\noindent We introduce the set of potentially realisable multisets of transitions of a protocol:

\begin{definition}\label{def:potentiallyrealisable}
A multiset $\pi$ of transitions is \emph{potentially realisable} if there are $i \in \N$ and $\Conf \in \N^Q$ such that $\InC{i}\potrans{\pi}\Conf$. 
\end{definition}
\noindent The reason for the name is as follows. If $\pi$ is not potentially realisable, then by Lemma \ref{lem:easy}\ref{easy1} no sequence $\sigma \in T^*$ with Parikh mapping $\pi$ can be executed from any initial configuration, and so $\pi$ cannot be ``realised''. In other words,  potential realisability is a necessary but not sufficient condition for the existence of an execution that ``realises'' $\pi$. 

\subsection{Structure of the proof}

We  can now give a high-level view of the proof of the bound $\eta\leq 2^{(2n+2)!}$. The starting point is a version of Lemma \ref{lem:pumping} in which,
crucially, the condition $\InC{b} \transS \Confb$ is replaced by the weaker $\InC{b} \potransS \Confb$.

\begin{lemma}
\label{lem:pumping2}
If there exist $a,b \in \N$,  a basis element $(\Basel, S)$ of $\SC$, configurations  $\Confa, \Confb \in \N^S$,
and a configuration $\LargeC$ satisfying 
\begin{enumerate}[(i)]
\item \label{pumping1} $\InC{a} \transS \LargeC \transS \Basel + \Confa$, and 
\item \label{pumping2} $\InC{b} \potrans{\pi} \Confb$ for some $\pi \in \N^T$ such that $\LargeC$ is $2 \size{\pi}$-saturated,
\end{enumerate}
\noindent then $\eta \leq a$.
\end{lemma}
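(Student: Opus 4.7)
The plan is to mirror the proof of Lemma \ref{lem:pumping}, inserting one extra layer that uses the saturation of $\LargeC$ to upgrade the potential realisation in (ii) into an actual transition sequence. Assume for contradiction that $\eta > a$, so $\PP_n$ rejects $a$. I will show that for every $\lambda \geq 0$ the initial configuration $\InC{a + \lambda b}$ reaches a $0$-stable configuration, yielding the desired contradiction once $a + \lambda b \geq \eta$.

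Since $\PP_n$ is leaderless, $\InC{a + \lambda b} = \InC{a} + \lambda \InC{b}$, and (i) together with monotonicity gives $\InC{a + \lambda b} \transS \LargeC + \lambda \InC{b}$. The central step is the claim
\begin{equation*}
\LargeC + \lambda \InC{b} \transS \LargeC + \lambda \Confb \qquad \text{for every } \lambda \geq 0,
\end{equation*}
which I would prove by induction on $\lambda$. The base case $\lambda = 0$ is immediate. For the inductive step, the induction hypothesis and monotonicity yield $\LargeC + \lambda \InC{b} \transS \LargeC + (\lambda - 1) \Confb + \InC{b}$. By (ii) together with monotonicity, $\LargeC + (\lambda - 1)\Confb + \InC{b} \potrans{\pi} \LargeC + \lambda \Confb$. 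The key observation is that $\LargeC + (\lambda - 1)\Confb + \InC{b}$ is still $2\size{\pi}$-saturated, because $\LargeC$ already is and the summands $(\lambda - 1)\Confb$ and $\InC{b}$ are non-negative; so Lemma \ref{lem:easy}(\ref{easy2}) converts the $\potrans{\pi}$ into an honest $\transS$, closing the induction.

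Applying (i) and monotonicity once more gives $\LargeC + \lambda \Confb \transS \Basel + \Confa + \lambda \Confb$. Chaining these together yields $\InC{a + \lambda b} \transS \Basel + \Confa + \lambda \Confb \in \Basel + \N^S \subseteq \SC$. As in Lemma \ref{lem:pumping}, the assumption $\eta > a$ forces $\Basel + \Confa \in \SC_0$, and the same reasoning (using downward closedness from Lemma \ref{lem:downwardsclosed} and the fact that $\Basel$ carries at least one output-$0$ agent) propagates to $\Basel + \Confa + \lambda \Confb \in \SC_0$ for every $\lambda$. Hence $\PP_n$ rejects every $a + \lambda b$, contradicting $\PP_n$ computing $x \geq \eta$ once $\lambda$ is large enough.

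The main obstacle is the inductive step. One cannot apply $\lambda$ copies of $\pi$ in a single shot via Lemma \ref{lem:easy}(\ref{easy2}), since that would require $2\lambda\size{\pi}$-saturation, a much stronger hypothesis than what is available. The fix is to realise $\pi$ one copy at a time, exploiting the fact that the ``reservoir'' $\LargeC$ is never consumed: only non-negative contributions are added to it, so its $2\size{\pi}$-saturation is preserved across all $\lambda$ rounds.
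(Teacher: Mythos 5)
Your proof is correct and takes essentially the same approach as the paper's: use the saturation of $\LargeC$ to realise $\pi$ once via Lemma~\ref{lem:easy}(\ref{easy2}), then iterate $\lambda$ times by monotonicity, exploiting that only non-negative multisets are ever added to the reservoir $\LargeC$. The paper phrases the iteration by replicating a concrete firing sequence $\sigma$ with Parikh image $\pi$, whereas you re-verify $2\size{\pi}$-saturation at each inductive step and re-invoke Lemma~\ref{lem:easy}(\ref{easy2}); this is a purely cosmetic difference, and your closing paragraph correctly identifies the key point that makes the induction go through.
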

\begin{proof}
We first claim that
\begin{equation*}
\label{claim}
\InC{a + \lambda b} \transS B+ \Confa + \lambda \Confb
\end{equation*}
holds for every $\lambda \geq 0$. To prove this, observe first that
$\InC{b} \potrans{\pi} \Confb$ implies $\LargeC + \InC{b} \potrans{\pi} \LargeC + \Confb$. Since
$\LargeC$ is $2 \size{\pi}$-saturated so is $\LargeC + \InC{b}$, and, by Lemma \ref{lem:easy}\ref{easy2}, we have $\LargeC + \InC{b} \trans{\sigma} \LargeC + \Confb$ where $\sigma$ is any sequence with Parikh mapping $\pi$. With an induction on $\lambda$, we immediately derive
\begin{equation}
\label{largereach}
\LargeC + \lambda\InC{b} \trans{\sigma^\lambda} \LargeC + \lambda\Confb \ . \tag{$*$}
\end{equation}
\noindent 
Since $\PP_n$ is leaderless,  $\InC{a + \lambda b}  =  \InC{a} + \lambda \InC{b}$ holds, and:
$$\begin{array}{rclll}
\InC{a} + \lambda \InC{b} & \transS &  \LargeC  + \lambda \InC{b}  &\quad & \mbox{by \ref{pumping1}}  \\
&  \transS &  \LargeC  + \lambda \Confb & \quad & \mbox{by (\ref{largereach})}  \\
& \transS  & \Basel + \Confa + \lambda \Confb  & \quad & \mbox{by \ref{pumping1}.}
\end{array}$$
\noindent This proves the claim.

Assume now that $\eta > a$, i.e.\ $\PP_n$ rejects $a$. Since $\Confa,\Confb \in \N^S$, we have $B+\Confa+\lambda\Confb \in \SC$, and so $B+\Confa+\lambda\Confb \in \SC_0$ for every $\lambda \geq 0$. So, by the claim, $\PP_n$ rejects $a + \lambda b$ for every $\lambda \geq 0$, contradicting that it computes $x \geq \eta$.
\end{proof}

In the next sections we show that $a:= 2^{(2n+2)!}$ satisfies the conditions of Lemma \ref{lem:pumping2}. We proceed in three steps:

\begin{enumerate}[(a)]
\item \label{item:step1} Section \ref{subsec:saturated} proves that for every $j \in \N$ and input $a \geq j 3^n$ the initial configuration
$\InC{a}$ can reach a $j$-saturated configuration $\LargeC$. We remark that this result is only true for leaderless protocols.
\item \label{item:step2} 
Let $S\subseteq Q$. Section \ref{subsec:epsconcentrated} proves that $\InC{a}\potransS B+\Confa$, for $a\in\N$, $\Confa\in\N^S$ and a configuration $B$, implies $\InC{b}\potransS\Confb$ for some $b\in\N,\Confb\in\N^S$, provided that $a$ is “large” relative to $\abs{B}$.
\item \label{item:step3} Section \ref{subsec:bound} puts everything together, and gives the final bound.
\end{enumerate}

\subsection{Reaching $j$-saturated configurations} \label{subsec:saturated}

Recall our assumption that for every state $q \in Q$ there exists an input $i_q$ such that $\InC{i_q} \transS C_q$ for some configuration $C_q$ such that $C_q(q)> 0$.
By monotonicity, we have $\InC{i} \transS C$ for the input $i := \sum_{q \in q} i_q$ and the $1$-saturated configuration $C := \sum_{q \in Q} C_q$. 
We show that we can choose $i < 3^n$ and that $\InC{i} \trans{\sigma} C$ for some $\sigma$ such that $\size{\sigma} \leq 3^n$. 
It follows that for every $j \in \N$ the input $j 3^n$ can reach a $j$-saturated configuration by executing $j$ times $\sigma$.

\begin{lemma}\label{lem:findtransition}
 Let $\Conf$ be a configuration satisfying $x\in \supp{\Conf} \subset  Q$. There exists a transition $p,q \mapsto p', q'$ such that $\{p,q\} \subseteq \supp{\Conf}$ and $\{p', q' \} \not\subseteq \supp{\Conf}$.
\end{lemma}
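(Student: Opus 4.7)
The plan is to argue by contradiction: suppose that every transition $p,q \mapsto p',q'$ whose inputs satisfy $\{p,q\}\subseteq S$, where $S:=\supp{\Conf}$, also satisfies $\{p',q'\}\subseteq S$. I want to derive a contradiction with the (standing) assumption of Section~\ref{subsec:saturated} that every state $r\in Q$ is populated by some reachable configuration from some initial configuration.

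First I would observe the following closure property: under the contradiction hypothesis, firing any enabled transition at a configuration with support contained in $S$ yields again a configuration with support contained in $S$. By induction on the length of transition sequences, this means that every configuration reachable from a configuration of support $\subseteq S$ still has support $\subseteq S$.

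Next I would exploit that $S\subsetneq Q$ to pick some state $r\in Q\setminus S$. By the reachability assumption, there exist $i_r\in\N$ and a configuration $\Conf_r$ with $\Conf_r(r)\geq 1$ such that $\InC{i_r}\transS \Conf_r$. Because the protocol is leaderless the input $\InC{i_r}=i_r\cdot x$ has support $\{x\}$, which is contained in $S$ since $x\in\supp{\Conf}$ by hypothesis. Applying monotonicity to the execution $\InC{i_r}\transS \Conf_r$ with the added multiset $\Conf$ yields
\[
  \Conf + \InC{i_r} \;\transS\; \Conf + \Conf_r,
\]
and the source $\Conf+\InC{i_r}$ has support exactly $S\cup\{x\}=S$, while the target satisfies $(\Conf+\Conf_r)(r)\geq 1$, so $r$ lies in its support.

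This contradicts the closure property from the first step: a configuration of support $\subseteq S$ reaches a configuration whose support contains $r\notin S$. Hence some transition $p,q\mapsto p',q'$ with $\{p,q\}\subseteq S$ must have $\{p',q'\}\not\subseteq S$, proving the lemma. The only subtle point, and the place where leaderlessness matters, is that $\InC{i_r}$ has support $\{x\}\subseteq S$, so that $\Conf+\InC{i_r}$ inherits support $S$; in a protocol with leaders one could not conclude this and the argument would break down.
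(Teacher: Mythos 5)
Your proof is correct, and it is essentially the contrapositive of the paper's argument: the paper takes a minimal-length run from some $\InC{i}$ whose support escapes $S=\supp{\Conf}$ and peels off the last transition, whereas you assume $S$ is transition-closed, show by induction that no run starting from support $\subseteq S$ can escape $S$, and contradict the standing assumption that some state $r\notin S$ is reachable. Both use leaderlessness at the same place, to ensure initial configurations have support $\{x\}\subseteq S$. One small redundancy: the monotonicity step where you add $\Conf$ to both sides of $\InC{i_r}\transS\Conf_r$ is unnecessary, since $\InC{i_r}$ already has support $\{x\}\subseteq S$ and your closure property applies to any configuration with support contained in $S$, not only those with support exactly $S$; so $\supp{\Conf_r}\subseteq S$ follows directly and already contradicts $r\in\supp{\Conf_r}$.
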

\begin{proof}
Since $\supp{\Conf}$ is strictly included in $Q$, there exist $i\in\setN$, a word $\sigma\in T^*$, a configuration $\Conf'$ such that $\InC{i} \trans{\sigma}\Conf'$ and $\supp{\Conf'} \not\subseteq \supp{\Conf}$. Assume w.l.o.g. that $\sigma$ has minimal length. If $\sigma=\epsilon$ then $\supp{\Conf'}\subseteq \{x\}$ contradicting $\supp{\Conf'} \not\subseteq \supp{\Conf}$. So $\sigma=\sigma't$ for some transition $t=p,q \mapsto p', q'$, and  $\InC{i} \trans{\sigma'}\Conf'' \trans{t} \Conf'$ for some configuration $\Conf''$. From $\Conf'' \trans{t} \Conf'$ we derive $\{p,q\}\subseteq \supp{\Conf''}$ and $\supp{\Conf'}\subseteq\supp{\Conf''}\cup\{p',q'\}$. By minimality of $\sigma$ and $\size{\sigma'}<\size{\sigma}$, we deduce that $\supp{\Conf''}\subseteq \supp{\Conf}$. In particular $\{p,q\}\subseteq \supp{\Conf}$, and $\supp{\Conf'}\subseteq \supp{\Conf}\cup\{p',q'\}$. As $\supp{\Conf'} \not\subseteq \supp{\Conf}$, we deduce that $\{p',q'\}\not\subseteq \supp{\Conf}$.
\end{proof}

\begin{lemma}\label{lem:allpopulated}
There exists a word $\sigma\in T^*$, a $1$-saturated configuration $\Conf$ and a sequence $\sigma$ of length at most $3^n$ such that  $\InC{3^n} \xrightarrow{\sigma} \Conf$.
\end{lemma}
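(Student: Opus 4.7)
The plan is to prove by induction on $k$, for $1 \leq k \leq n$, the slightly strengthened claim that there exist a sequence $\sigma_k \in T^*$ with $|\sigma_k| \leq 3^k - 1$ and a configuration $D_k$ such that $3^k \cdot x \trans{\sigma_k} D_k$, $x \in \supp{D_k}$, and $|\supp{D_k}| \geq \min(k, n)$. Taking $k = n$ then yields a sequence of length at most $3^n - 1 \leq 3^n$ reaching a configuration $D_n$ with $\supp{D_n} = Q$, i.e., a $1$-saturated configuration reachable from $\InC{3^n} = 3^n \cdot x$, as required.

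The base case $k = 1$ is immediate by taking $\sigma_1 := \varepsilon$ and $D_1 := 3 \cdot x$, whose support $\{x\}$ has size $1 = \min(1, n)$.

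For the inductive step, view $3^{k+1} \cdot x$ as three disjoint copies of $3^k \cdot x$. By monotonicity, running $\sigma_k$ on each copy in succession gives $3^{k+1} \cdot x \trans{\sigma_k^3} 3 D_k$ using $3|\sigma_k| \leq 3^{k+1} - 3$ transitions. If $\supp{D_k} = Q$, set $\sigma_{k+1} := \sigma_k^3$ and $D_{k+1} := 3 D_k$ and we are done. Otherwise $x \in \supp{D_k} \subsetneq Q$, so Lemma~\ref{lem:findtransition} supplies a transition $t = p, q \mapsto p', q'$ with $\{p, q\} \subseteq \supp{D_k}$ and $\{p', q'\} \not\subseteq \supp{D_k}$. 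The key observation is that every state in $\supp{D_k}$ has multiplicity at least $3$ in $3 D_k$, so $t$ is enabled there even in the degenerate case $p = q$. Firing $t$ from $3 D_k$ consumes at most two agents in states of multiplicity $\geq 3$, so every state of $\supp{D_k}$ survives and $\supp{D_{k+1}} \supseteq \supp{D_k} \cup \{p', q'\} \supsetneq \supp{D_k}$, in particular preserving $x \in \supp{D_{k+1}}$. Setting $\sigma_{k+1} := \sigma_k^3 \cdot t$ yields $|\sigma_{k+1}| \leq 3^{k+1} - 2 \leq 3^{k+1} - 1$ and $|\supp{D_{k+1}}| \geq k+1$, closing the induction.

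The main subtlety is ensuring that the transition produced by Lemma~\ref{lem:findtransition} is actually fireable: the threefold blowup encoded in the invariant $3^k \cdot x \trans{} D_k$ is exactly what provides at least three agents in each populated state, covering the potentially problematic case $p = q$. A second minor point is the arithmetic: using the tighter bound $|\sigma_k| \leq 3^k - 1$ (rather than $\leq 3^k$) is necessary so that $3|\sigma_k| + 1 \leq 3^{k+1} - 1$ propagates through the induction.
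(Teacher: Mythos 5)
Your proof is correct and follows essentially the same approach as the paper's: triple the population at each of at most $n$ rounds and invoke Lemma~\ref{lem:findtransition} to add a new state to the support, with the spare agents from the tripling guaranteeing the new transition can fire without shrinking the support. The only cosmetic differences are that the paper keeps one untouched copy of $\Conf_k$ when firing $t$ (rather than arguing directly about multiplicities in $3\Conf_k$), and your length bound $3^n-1$ is slightly looser than the paper's $(3^n-1)/2$, but both suffice.
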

\begin{proof}
We build a sequence $\sigma_0, \sigma_1 \ldots$ of words in $T^*$ and a non decreasing sequence $\Conf_0, \Conf_1 \ldots$ of configurations such that for every $j \geq 0$ we have (with the convention $\Conf_{-1}=0$)
  \begin{description}
  \item[$\bullet$] $\InC{3^j} \trans{\sigma_j} \Conf_j$;
  \item[$\bullet$] $\supp{\Conf_{j-1}}$ is saturated, or $\supp{\Conf_{j-1}} \subset \supp{\Conf_j}$, and
  \item[$\bullet$] $\Cardinality{\sigma_j}=(3^j-1)/2$.
  \end{description}
\noindent Since $\PP_n$ has $n$ states, some $\Conf_j$ with $0\leq j \leq n$ is saturated. 

The sequence is built inductively. Choose $\Conf_0 = \InC{1}$ and $\sigma_0 = \epsilon$. Assume that
$\sigma_0, \ldots, \sigma_k$ has been built. If $\Conf_k$ is saturated, then choose $\Conf_{k+1} := 3\Conf_k$ and $\sigma_{k+1} := \sigma_k^3$.
Assume $\Conf_k$ is not saturated. Since $x\in  \supp{\Conf_1}$ we have $x\in  \supp{\Conf_k} \subset Q$. By  Lemma~\ref{lem:findtransition} there exists a transition $t= p, q \mapsto p', q'$ such that $p,q\in \supp{\Conf_k}$ and $\{p',q'\}\not\subseteq  \supp{\Conf_k}$. Since $\InC{3^k} \trans{\sigma_k}\Conf_k$ and $\PP_n$ is leaderless, we have $\InC{3^{k+1}} \trans{\sigma_k^3} 3\Conf_k$. 
Since $p,q\in \supp{\Conf_k}$, transition $t$ is enabled at $2\Conf_k$. We choose $\Conf_{k+1}:=\Conf_k+\Conf_k'$ where $\Conf_k'$ is  the configuration satisfying $2\Conf_k\trans{t}\Conf_k'$. Now, just set $\sigma_{k+1}:=\sigma_k^3 t$.
\end{proof}

\subsection{Reaching $\epsilon$-concentrated stable configurations}
\label{subsec:epsconcentrated}

This section contains the core of the proof. We want to find some $b\in\N$ and a configuration $\Confb\in\N^S$ with $\InC{b}\potransS\Confb$, where $S\subseteq Q$ is given by a basis element $(\Basel,S)$. We start by noting that any sequence $\InC{a}\transS B+\Confa$, again with $a\in\N$ and $\Confa\in\N^S$, already fulfils this condition \emph{approximately} if $a$ is large and $(\Basel,S)$ is a small basis element, i.e.\ $B$ is “small” relative to $\Confa$. The next lemma formalises this notion.

\begin{definition}\label{def:epsconc}
Let $0 < \epsilon \leq 1$ and $S \subseteq Q$. A configuration $C$ is \emph{$\epsilon$-concentrated in $S$} if $C(S) \geq (1 -\epsilon) \size{C}$ or, equivalently, $C(Q\setminus S) \leq \epsilon \size{C}$. 
\end{definition}

\newcommand{\loww}{k n \beta}
\begin{lemma}\label{lem:big}
Let $k\geq 1$, let $a:= \loww$, where $\beta$ is the constant of Definition \ref{def:smallbasisconstant}, and let $\LargeC$ denote a configuration with $\InC{a}\transS\LargeC$. There exists a small basis element $(\Basel,S)$ of $\SC$, and a $\Confa \in \setN^S$ s.t.
\begin{enumerate}
\item\label{lem:big:ca} $\InC{a} \transS \LargeC \transS \Basel + \Confa$, and
\item\label{lem:big:cc} $\Basel + \Confa$ is $\frac{1}{k}$-concentrated in $S$.
\end{enumerate}
\end{lemma}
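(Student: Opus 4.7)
The plan is to take any stable configuration reachable from $\LargeC$, decompose it via a small basis of $\SC$, and verify $\frac{1}{k}$-concentration by a simple conservation-of-agents argument.

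\textbf{Step 1 (reach a stable configuration).} As observed just after the definition of $\SC$ in Section~\ref{ssec:stablesets}, correctness of $\PP_n$ implies that from every configuration reachable from $\InC{a}$ some stable configuration is in turn reachable. In particular $\LargeC \transS E$ for some $E \in \SC$. By Lemma~\ref{lem:basesaresmall}, $\SC$ admits a small basis, so $E \in \Basel + \N^S$ for some small basis element $(\Basel, S)$ of $\SC$, and we can write $E = \Basel + \Confa$ with $\Confa \in \N^S$. This immediately yields $\InC{a} \transS \LargeC \transS \Basel + \Confa$, which is condition~\ref{lem:big:ca}.

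\textbf{Step 2 (concentration).} Every transition of $\PP_n$ consumes and produces exactly two agents, so the population size is preserved along any execution. Since $\PP_n$ is leaderless and $\InC{a} = a \cdot x$, this gives $\size{\Basel + \Confa} = \size{E} = \size{\LargeC} = \size{\InC{a}} = a = \loww$. Because $\Confa \in \N^S$, we get
\[
(\Basel + \Confa)(Q \setminus S) \;=\; \Basel(Q \setminus S) \;\le\; \norm{\Basel}_1 \;\le\; n\,\norm{\Basel}_\infty \;\le\; n\beta,
\]
where the last inequality uses that $(\Basel, S)$ is a \emph{small} basis element (Definition~\ref{def:smallbasisconstant}). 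Dividing,
\[
\frac{(\Basel + \Confa)(Q \setminus S)}{\size{\Basel + \Confa}} \;\le\; \frac{n\beta}{\loww} \;=\; \frac{1}{k},
\]
so $\Basel + \Confa$ is $\frac{1}{k}$-concentrated in $S$, establishing condition~\ref{lem:big:cc}.

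\textbf{Main obstacle.} There is no deep difficulty: the statement is really the observation that the threshold $a = \loww$ is chosen exactly so that the $n\beta$ worst-case mass a small basis vector can place outside $S$ is only a $1/k$ fraction of the total population. The one thing that must be used carefully is that passing to a \emph{small} basis element (rather than an arbitrary one) is precisely what bounds $\norm{\Basel}_\infty$ by $\beta$ and hence makes the estimate on $(\Basel+\Confa)(Q\setminus S)$ depend only on $n$ and $\beta$.
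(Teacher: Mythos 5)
Your proof is correct and follows essentially the same route as the paper: reach a stable configuration from $\LargeC$, decompose it via a small basis element $(\Basel,S)$ of $\SC$, and use $\norm{\Basel}_\infty\le\beta$ together with conservation of population size to bound the mass outside $S$ by $n\beta = a/k$. The only difference is that you spell out the intermediate inequalities a bit more explicitly than the paper does.
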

\begin{proof}
As $\LargeC$ is reachable from an input configuration, it can reach a configuration $\Basel+\Confa$, where $\Confa\in\N^S$ and $(\Basel,S)$ is a small basis element of $\SC$. The latter implies $\norm{\Basel}_\infty\le\beta$, so we have $\abs{\Basel}\le n\beta=a/k=\abs{\Basel+\Confa}/k$, and $\Basel+\Confa$ is $\tfrac{1}{k}$-concentrated in $S$.
\end{proof}

In the remainder of this section we use another small basis theorem, Pottier's small basis theorem for Diophantine equations, to extend the above result: not only can we reach $\epsilon$-concentrated configurations for arbitrarily small $\epsilon>0$, we can also \emph{potentially} reach a $0$-concentrated configuration, i.e.\ a configuration in which \emph{all} agents populate $S$.

For this, we observe that the potentially realisable multisets $\pi$ (see Definition~\ref{def:potentiallyrealisable}) are precisely the solutions of the system of $\size{Q}-1$ Diophantine equations over the variables 
$\{ \pi(t) \}_{t \in T}$ given by \[\sum_{t \in T} \pi(t) \Displ{t}(q) \ge0 \qquad\mbox{ for $q \in Q\setminus\{x\}$.}\]

Let $A \cdot y \geq 0$ be a homogeneous system of linear Diophantine inequalities, i.e.\ a solution is a vector $m$ over the natural numbers such that $A \cdot m \geq 0$. A set $\Pottbase$ of solutions is a basis
if every solution is  the sum of a multiset of solutions of $\Pottbase$. Formally, $\Pottbase$ is a \emph{basis} if for every solution $m$, there exists a multiset $M \in \N^\Pottbase$ such that $m = \sum_{b\in \Pottbase}M(b)\cdot b$.
It is easy to see that every system has a finite basis. Pottier's theorem shows that it has a small basis:

\begin{theorem}[\cite{Pottier91}]\label{thm:pottier}
Let $A \cdot y \geq 0$ be a system of $e$ linear Diophantine equations on $v$ variables.
There exists a basis $\Pottbase \subseteq \N^v$ of solutions such that for every $m \in \Pottbase$:
$$\norm{m}_1 \leq \bigg(  1 + \max_{i=1}^e \sum_{j=1}^v \abs{a_{ij}}  \bigg)^e \ .$$
\end{theorem}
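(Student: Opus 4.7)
The plan is to take $\mathcal{B}$ to be the set of \emph{irreducible} solutions of the system: those non-zero $m \in \N^v$ satisfying $A \cdot m \ge 0$ that cannot be written as $m = m_1 + m_2$ with both $m_1, m_2 \in \N^v$ themselves non-zero solutions. A standard application of Dickson's lemma shows that the irreducibles form an antichain in $\N^v$, hence are finite, and every solution decomposes as a finite $\N$-combination of irreducibles; so $\mathcal{B}$ is a basis in the sense demanded by the theorem. The whole work then consists in bounding $\norm{m}_1$ for an arbitrary $m \in \mathcal{B}$.

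\textbf{Pigeonhole via partial sums.} Write $M := \max_{i} \sum_{j} \abs{a_{ij}}$ and fix an irreducible solution $m$ of norm $k := \norm{m}_1$. Expand $m = e_{j_1} + \cdots + e_{j_k}$ as a sum of standard unit vectors (in an order to be chosen below), set $s_\ell := e_{j_1} + \cdots + e_{j_\ell}$, and let $v_\ell := A \cdot s_\ell \in \Z^e$, so that $v_0 = 0$ and $v_k = A m \ge 0$. The key observation is that if $v_\ell = v_{\ell'}$ for indices $0 \le \ell < \ell' \le k$ other than the pair $(0,k)$, then $m' := s_{\ell'} - s_\ell$ is a non-zero solution satisfying $A m' = 0$, and $m - m'$ is itself a non-zero solution; writing $m = m' + (m - m')$ exhibits $m$ as the sum of two non-zero solutions, contradicting irreducibility. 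Thus all $v_\ell$'s are pairwise distinct (modulo the single boundary case), and their number $k + 1$ is bounded by the cardinality of any region of $\Z^e$ that contains them.

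\textbf{The main obstacle: the reordering.} The whole point is therefore to order the $k$ unit summands so that every $v_\ell$ lies in a subset of $\Z^e$ of cardinality $(1 + M)^e$ --- for instance a box of side $M$ --- from which the pigeonhole above would force $k \le (1+M)^e$. Each single increment $v_{\ell+1} - v_\ell = A \cdot e_{j_{\ell+1}}$ has $\norm{\cdot}_\infty$-norm at most $M$, so the problem reduces to a Steinitz-type statement: given vectors of $\norm{\cdot}_\infty$-norm at most $M$ summing to a prescribed non-negative target, can they be ordered so that all partial sums stay inside a box of side $M$? The plan is to handle this by a greedy procedure coupled with an exchange lemma: at each step one adds a not-yet-used unit whose image under $A$ drives the current partial sum toward the reference segment joining $0$ to $A m$, and an exchange argument shows that whenever this rule threatens to fail, the associated sub-multiset would itself witness a decomposition of $m$ contradicting irreducibility. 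Carrying out this combinatorial reordering with exactly the constant $(1 + M)^e$ is the main technical difficulty; the remainder is essentially bookkeeping on distinct values of partial sums.
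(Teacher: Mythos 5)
The paper only cites this theorem from Pottier~\cite{Pottier91} and does not reprove it, so there is no in-paper argument to compare against; what follows is a review of your sketch on its own merits.

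Your opening claim---that the irreducibles form an antichain in $\N^v$ and are therefore finite by Dickson's lemma---does not hold for the inequality system $Ax\ge 0$. If $m_1\lneqq m_2$ are both solutions, the difference $m_2-m_1\in\N^v$ need not satisfy $A(m_2-m_1)\ge 0$, so $m_2$ need not be reducible. Concretely, with $e=1$, $v=2$, $A=(1,-1)$, both $(1,0)$ and $(1,1)$ are irreducible in your sense, yet $(1,0)\lneqq(1,1)$. The standard remedy is to pass to the slack system $Ax-Iy=0$ over $\N^{v+e}$, where differences of solutions are again solutions and Dickson does apply, and then project back; as written, though, your finiteness argument is incorrect.

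The central gap is the reordering step, which you yourself flag as the ``main technical difficulty'' and do not carry out. You reduce matters to finding an ordering of the $k=\norm{m}_1$ unit summands under which the partial images $v_0,\dots,v_{k-1}\in\Z^e$ are pairwise distinct and confined to a region of at most $(1+M)^e$ lattice points. But the region you propose, a box of side $M$, cannot work in general: $v_k=Am$ is independent of the ordering and can be far larger than $M$ in $\norm{\cdot}_\infty$, which forces $v_{k-1}$ (differing from $v_k$ by at most $M$ coordinatewise) far from $v_0=0$. Take
$$A=\begin{pmatrix}N&-1\\ -1&N\end{pmatrix},\qquad m=(N,1).$$
One checks that $m$ is irreducible, $M=N+1$, $\norm{m}_1=N+1$, but $Am=(N^2-1,\,0)$, so $(v_{k-1})_1\ge N^2-1-M$ while $(v_0)_1=0$; for $N\ge 3$ no box of side $M$ contains both. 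So the Steinitz-type statement you need is not the one you stated, and it is not clear what the right region is, nor that the greedy-with-exchange heuristic would produce it with exactly the constant $(1+M)^e$. As it stands the sketch does not establish the bound; the hard part of Pottier's theorem is precisely the part you leave as an announced plan.
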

Since the potentially realisable multisets are the solutions of a system of Diophantine equations, by applying Pottier's theorem we obtain:
\begin{corollary}\label{cor:pottier}
Let $\pottname := \pottcons$. There exists a basis of potentially realisable multisets such that every element $\pi$ of the basis satisfies  $\size{\pi} \leq \pottname/2$. Moreover, $\InC{i} \potrans{\pi} \Conf$ for some $i \leq \pottname$  and some $\Conf$ such that  $\Conf(Q) \leq \pottname$ and $\Conf(x)=0$.
\end{corollary}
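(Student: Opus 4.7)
Both parts follow from direct applications of Pottier's small basis theorem (Theorem~\ref{thm:pottier}) to systems of linear Diophantine inequalities. Since $\PP_n$ is leaderless with unique input state $x$, we have $\InC{i} = i\cdot x$, so a multiset $\pi \in \N^T$ is potentially realisable iff $i\cdot x + \Displ{\pi} \ge 0$ for some $i \ge 0$, which reduces to $\Displ{\pi}(q) \ge 0$ for every $q \in Q \setminus \{x\}$ (the input coordinate can always be compensated by taking $i$ large). The set of potentially realisable multisets is thus exactly the non-negative integer solution set of a homogeneous system $A\pi \ge 0$ with $\size{Q}-1$ rows, $\size{T}$ columns, and entries $A_{q,t} = \Displ{t}(q) \in \{-2,\dots,2\}$.

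For the first claim I would plug this system into Theorem~\ref{thm:pottier}. Each row sum of absolute values is at most $2\size{T}$, so Pottier yields a basis in which every $\pi$ satisfies
\[
\size{\pi} = \norm{\pi}_1 \le (1+2\size{T})^{\size{Q}-1} \le (2\size{T}+1)^{\size{Q}} = \pottname/2.
\]
For the ``moreover'' claim I would augment the system by the further inequality $-\sum_t \pi(t)\Displ{t}(x) \ge 0$, which forces $\Displ{\pi}(x) \le 0$ and so singles out the potentially realisable multisets that can fully consume the input. The augmented system has $\size{Q}$ rows and the same coefficient bound, so a second application of Pottier produces a basis in which every element $\pi$ still satisfies $\size{\pi} \le (1+2\size{T})^{\size{Q}} = \pottname/2$.

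Given any such $\pi$, I would then set $i := -\Displ{\pi}(x)$ and $\Conf := i\cdot x + \Displ{\pi}$; then $\InC{i} \potrans{\pi} \Conf$, and $\Conf(x) = 0$ by construction. Since $\abs{\Displ{t}(x)} \le 2$ for every transition, $i \le 2\size{\pi} \le \pottname$, and since each transition conserves the total number of agents, $\Displ{\pi}(Q) = 0$, so $\Conf(Q) = i \le \pottname$. The plan is essentially bookkeeping; the only step requiring any care is lining up the two exponents $\size{Q}-1$ and $\size{Q}$ with the target bound $\pottname/2$, which works out thanks to the factor~$2$ in the definition of $\pottname$. I do not anticipate a genuine technical obstacle, since the system of inequalities is already in the exact form demanded by Pottier's theorem.
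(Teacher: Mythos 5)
Your proposal is correct and follows essentially the same route as the paper: express the potentially realisable multisets as the solution set of a homogeneous Diophantine system with $\lvert Q\rvert-1$ rows and entries in $\{-2,\dots,2\}$, apply Pottier's theorem, and then shave off the superfluous $x$-agents to get $\Conf(x)=0$. One small redundancy worth noting: your second application of Pottier, with the augmented inequality $-\Displ{\pi}(x)\ge 0$, is unnecessary, because that inequality is automatically satisfied by every potentially realisable multiset. Indeed, transitions conserve the number of agents, so $\sum_{q\in Q}\Displ{\pi}(q)=0$; combined with $\Displ{\pi}(q)\ge 0$ for all $q\ne x$, this forces $\Displ{\pi}(x)\le 0$. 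The paper therefore applies Pottier only once (to the $(\lvert Q\rvert-1)$-row system) and obtains the ``moreover'' clause directly, by the observation that $\InC{i}\potrans{\pi}\Conf$ implies $\InC{i-\Conf(x)}\potrans{\pi}\Conf-\Conf(x)\cdot x$, with $i-\Conf(x)=-\Displ{\pi}(x)\ge 0$. Your bookkeeping at the end ($i:=-\Displ{\pi}(x)\le 2\size{\pi}\le\pottname$ and $\Conf(Q)=i$ by conservation) matches the paper's; only the extra invocation of Pottier, and the suggestion that one is ``singling out'' a proper subset of the realisable multisets, is superfluous.
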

\begin{proof}
The potentially realisable multisets are precisely the solutions of a system of $\size{Q}-1$ Diophantine equations on $\size{T}$ variables.
The matrix $A$ of the system satisfies $\abs{a_{ij}} = \abs{\Displ{j}(q_i)}$, and so $\abs{a_{ij}} \leq 2$ for every $1 \leq i<\size{Q}$ and every $1 \leq j \leq \size{T}$.
The result follows from Theorem \ref{thm:pottier}.

For $\Conf(x)=0$, note that $\InC{i} \potrans{\pi} \Conf$ implies $\InC{i-\Conf(x)} \potrans{\pi} \Conf-x\cdot\Conf(x)$.
\end{proof}

Recall that Definition \ref{def:smallbasisconstant} introduced the small basis constant $\beta$. Analogously, we now introduce the Pottier constant:

\begin{definition}
\label{def:pottierconstant}
We call $\pottname := \pottcons$ the \emph{Pottier constant} for the protocol $\PP$. 
\end{definition}

\begin{remark}
For a deterministic population protocol (i.e.\ a protocol such that for every pair of states there is at most one transition) we could instead use $\pottname=2(\abs{Q}+2)^{\abs{Q}}$.
\end{remark}

The following lemma is at the core of our result. Intuitively, 
it states that if some potentially realisable multiset leads to a $(1 / \pottname)$-concentrated configuration in $S$, then some  \emph{small} potentially realisable multiset leads to a $0$-concentrated configuration in $S$.

\begin{lemma}\label{lem:allempty}
Let $\InC{i} \potransS \Conf$ and $S \subseteq Q$. If $C$ is $(1/\pottname)$-concentrated in $S$, then $\InC{j} \potrans{\theta} \Conf'$ for some number $j$, potentially realisable multiset $\theta \in \N^T$, and configuration $\Conf'$ such that $\size{\theta} \leq \pottname/2$, $j \leq \pottname$ and $\Conf'$ is $0$-concentrated on $S$, i.e.\ $\Conf' \in \N^S$.
\end{lemma}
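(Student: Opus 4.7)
The plan is to invoke Pottier's small-basis theorem (Corollary \ref{cor:pottier}) and decompose the multiset $\pi$ from $\InC{i}\potrans{\pi}C$ over a Pottier basis $\Pottbase$ of potentially realisable multisets as $\pi = \sum_k \lambda_k \pi_k$; the goal is to exhibit one active basis element whose associated output already lies in $\N^S$, and use it as $\theta$. Each basis element $\pi_k$ satisfies $|\pi_k|\leq \pottname/2$ and $\InC{i_k}\potrans{\pi_k}C_k$ with $C_k(x)=0$ and $|C_k|=i_k$ (by mass conservation in leaderless protocols); crucially, reading Pottier's estimate directly on the $(|Q|-1)$-equation system of potentially realisable multisets yields the sharper bound $|\pi_k|_1 \leq (2|T|+1)^{|Q|-1}$, hence $i_k \leq 2|\pi_k|_1 \leq \pottname/(2|T|+1)$, strictly smaller than $\pottname$.

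The case $x \in S$ is trivial: take $\theta := 0$, $j := 2$, $C' := 2\cdot x$. Assume $x \notin S$ henceforth.

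The central claim is that some basis element $\pi_k$ with $\lambda_k > 0$ satisfies $C_k \in \N^S$; choosing $(\theta, j, C'):=(\pi_k, i_k, C_k)$ then meets every bound of the lemma. To prove the claim, suppose otherwise, so that every active $\pi_k$ has $C_k(Q\setminus S)\geq 1$. Expanding $C = j_0\cdot x + \sum_k \lambda_k C_k$ with $j_0 := i - \sum_k \lambda_k i_k = C(x) \geq 0$ and using $x \in Q\setminus S$,
\[
C(Q\setminus S) = j_0 + \sum_k \lambda_k\, C_k(Q\setminus S) \geq j_0 + \sum_k \lambda_k,
\]
while $|C| = j_0 + \sum_k \lambda_k i_k \leq j_0 + \tfrac{\pottname}{2|T|+1}\sum_k \lambda_k$. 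Substituting into the concentration hypothesis $C(Q\setminus S) \leq |C|/\pottname$ and rearranging gives
\[
j_0\,(1-1/\pottname) + \sum_k \lambda_k\,\bigl(1-1/(2|T|+1)\bigr) \leq 0.
\]
Both coefficients are strictly positive and $j_0, \lambda_k \geq 0$, forcing $j_0 = 0$ and $\sum_k \lambda_k = 0$, i.e.\ $\pi = 0$. But then $C = i\cdot x$ yields $C(Q\setminus S) = i = |C|$, contradicting $(1/\pottname)$-concentration since $|C|\geq 2$.

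The main technical point I expect to have to spell out carefully is the strict bound $i_k < \pottname$ extracted from Pottier directly: the loose $i_k \leq \pottname$ recorded in Corollary \ref{cor:pottier} would leave open a tight equality case in which every active basis element simultaneously saturates every Pottier bound and the counting argument collapses to an equality. With the sharper estimate in hand the counting inequality becomes genuinely strict, and any active basis element in the decomposition supplies the desired $\theta$.
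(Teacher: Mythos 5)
Your proof is correct and follows the same core strategy as the paper's (decompose the Parikh multiset over a Pottier basis and argue by counting), but you make three substantive changes, each of which actually tightens the paper's argument.

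First, you invert the case split: you dispatch $x\in S$ trivially (via $\theta=0$, $j=2$, $C'=2\cdot x$) and do the real work for $x\notin S$, whereas the paper dispatches $x\notin S$ and works with $x\in S$. Your choice is cleaner: the paper's ``trivial'' case forces $j=0$ and $\Conf'=0$, which strains the conventions that $\InC{j}$ needs $j\ge 2$ and that $\Conf'$ be a configuration, while your trivial case is unproblematic. Second, you skip the paper's reduction to $\Conf(x)=0$ and instead carry $j_0=\Conf(x)$ explicitly through the counting. This sidesteps a subtle point in the paper: the claim that $(1/\pottname)$-concentration of $\Conf$ passes to $\Conf^*=\Conf-\Conf(x)\cdot x$ with $i^*=i-\Conf(x)$ is not automatic, since $\abs{\Conf^*}<\abs{\Conf}$ while $\Conf^*(Q\setminus S)=\Conf(Q\setminus S)$. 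Third — and this is the point you flagged yourself — you replace the corollary's loose bound $i_k\le\pottname$ with the sharper $i_k\le 2\norm{\pi_k}_1\le\pottname/(2\abs{T}+1)$ read directly off Theorem~\ref{thm:pottier}. The paper's own counting step (``$i>\pottname\cdot\Conf(Q\setminus S)$'') silently uses a strict inequality that the definition of $(1/\pottname)$-concentration (which is $\le$) does not supply; with only the loose $i_j\le\pottname$ the pigeonhole collapses to the non-contradictory equality $\abs{J}=\Conf(Q\setminus S)$. Your sharper estimate repairs this cleanly, and the resulting inequality $j_0(1-1/\pottname)+\bigl(\sum_k\lambda_k\bigr)\bigl(1-1/(2\abs{T}+1)\bigr)\le 0$ forces $j_0=\sum_k\lambda_k=0$, hence $i=0$, contradicting $i=\abs{\Conf}\ge2$.

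One small wording slip at the end: once $j_0=0$ and $\sum_k\lambda_k=0$ you already have $i=j_0+\sum_k\lambda_k i_k=0$, so writing ``$\Conf=i\cdot x$ yields $\Conf(Q\setminus S)=i=\abs{\Conf}$, contradicting concentration'' is circuitous and slightly misleading — the real contradiction is simply $i=0<2$. The argument is sound; just state that directly.
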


\begin{proof}
\newcommand{\Xcount}{l}
If $x\notin S$ then we take $\theta$ as the empty multiset and are
done. Otherwise let $\Xcount:=\Conf(x)$ and set $i^*:=i-\Xcount$ and
$C^*:=C-\Xcount\cdot x$. Clearly, $\Conf^*(S)< i^*/\pottname$ follows
from $\Conf(S)<i/\pottname$. It therefore suffices to prove the lemma in
the case of $\Conf^*=\Conf$ and $i^*=i$, i.e.\ $\Xcount=0$.

Since $\InC{i} \potransS \Conf$, we have $\InC{i} \potrans{\pi} \Conf$
for some potentially realisable multiset $\pi$. By Corollary \ref{cor:pottier}
there exist potentially realisable multisets $\pi_1, \ldots, \pi_k \in \N^T$ s.t.\
$\pi = \pi_1+...+\pi_k$, numbers $i_1, \ldots, i_k \in \N$ and
configurations $\Conf_1, \ldots, \Conf_k$ (not necessarily distinct)
such that  
\begin{description}
\item[$\bullet$] $i = \sum_{j=1}^k i_j$,  $\pi = \sum_{j=1}^k \pi_j$, and $\Conf =
\sum_{j=1}^k \Conf_j$.
\item[$\bullet$] $\InC{i_j} \potrans{\pi_j} \Conf_j$.
\item[$\bullet$] $i_j \leq \pottname$ and $\Conf_j(Q) \leq \pottname$ for
$j\in\{1,..,k\}$.
\end{description}
Let  $J=\{j\in\{1,\ldots,k\} \mid i_j > 0\}$. We have
$$i = \sum_{j=1}^k i_j = \sum_{j\in J} i_j \leq \pottname\cdot\abs{J} \ .$$
Since $i > \pottname \cdot \Conf(S)$ by hypothesis, we deduce that
$\abs{J}>\Conf(S)$.  Assume that $\Conf_j(S)>0$ for every $j\in J$.
Then $\Conf(S)=\sum_{j\in J}\Conf_j(S)\geq \abs{J}$, a contradiction.
So $\Conf_j(S)=0$ for some $j\in J$, and we take $\theta :=\pi_j$.
\end{proof}

\subsection{The upper bound}
\label{subsec:bound}

We have now obtained all the results needed for our final theorem.

\begin{theorem}
Let $\PP_n$ be a leaderless population protocol with $n$ states,
and let $\beta$ and $\pottname$ be the constants of Definition \ref{def:smallbasisconstant} and Definition \ref{def:pottierconstant}.
If  $\PP_n$ computes a predicate $x \geq \eta$, then 
$$\eta \le \pottname n \beta 3^n \leq 2^{(2n+2)!}  \ .$$
\end{theorem}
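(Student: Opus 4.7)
The plan is to verify the hypotheses of Lemma \ref{lem:pumping2} for the specific choice $a := \pottname \cdot n\beta \cdot 3^n$, producing the required witnesses $\LargeC,(\Basel,S),\Confa,b,\pi,\Confb$; the inequality $\eta \le a$ is then immediate, and a crude arithmetic estimate closes the gap to $2^{(2n+2)!}$.

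The first step is to construct a highly saturated intermediate configuration $\LargeC$ reachable from $\InC{a}$. Lemma \ref{lem:allpopulated} provides a sequence $\sigma$ with $\InC{3^n}\trans{\sigma}\Conf$ for a $1$-saturated $\Conf$. Because the protocol is leaderless, I can split $a$ into $\pottname n\beta$ independent blocks of size $3^n$ and run $\sigma$ in each, obtaining $\InC{a}\trans{\sigma^{\pottname n\beta}}\LargeC$ for $\LargeC := (\pottname n\beta)\cdot\Conf$, which is $(\pottname n\beta)$-saturated. The next step is to apply Lemma \ref{lem:big} with $k := \pottname\cdot 3^n$ (so that $a = kn\beta$) to $\LargeC$, obtaining a small basis element $(\Basel, S)$ of $\SC$ and $\Confa\in\N^S$ with $\LargeC\transS\Basel+\Confa$ where $\Basel+\Confa$ is $\tfrac{1}{\pottname 3^n}$-concentrated — and hence in particular $\tfrac{1}{\pottname}$-concentrated — in $S$. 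This establishes hypothesis \ref{pumping1} of Lemma \ref{lem:pumping2}.

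The third step is to produce a potentially realisable witness from a small input. By Lemma \ref{lem:easy}\ref{easy1}, $\InC{a}\transS\Basel+\Confa$ implies $\InC{a}\potransS\Basel+\Confa$. Applying Lemma \ref{lem:allempty} to this $\tfrac{1}{\pottname}$-concentrated configuration supplies $b\le\pottname$, a potentially realisable $\pi\in\N^T$ with $\size{\pi}\le\pottname/2$, and $\Confb\in\N^S$ such that $\InC{b}\potrans{\pi}\Confb$. Since $2\size{\pi}\le\pottname\le\pottname n\beta$, the configuration $\LargeC$ is $2\size{\pi}$-saturated, so hypothesis \ref{pumping2} also holds. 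Lemma \ref{lem:pumping2} then delivers $\eta\le a = \pottname n\beta\cdot 3^n$. For the final numerical inequality, $\size{T}\le n^4$ gives $\pottname \le 2(2n^4+1)^n = 2^{\O(n\log n)}$, while $\beta = 2^{2(2n+1)!+1}$ dominates all other factors; using $(2n+2)! = (2n+2)(2n+1)!$, a routine (loose) calculation yields $\pottname n\beta\cdot 3^n\le 2^{(2n+2)!}$.

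The main obstacle is arranging that three a priori unrelated parameters align simultaneously: the concentration exponent from Lemma \ref{lem:big} must be at most $1/\pottname$ so that Lemma \ref{lem:allempty} can be invoked; the saturation level of $\LargeC$ must be at least $2\size{\pi}$ so that Lemma \ref{lem:easy}\ref{easy2} applies inside the proof of Lemma \ref{lem:pumping2}; and $a$ itself must have the form $kn\beta$ to feed into Lemma \ref{lem:big}. The product $\pottname\cdot n\beta\cdot 3^n$ is chosen precisely so that all three constraints hold at once: it is large enough for the concentration bound, divisible by $3^n$ so that $\pottname$-saturation can actually be reached from an initial configuration, and already of the required shape $kn\beta$.
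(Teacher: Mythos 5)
Your proposal is correct and follows essentially the same path as the paper's own proof: the same choices of $a=\pottname n\beta 3^n$, $\LargeC=(\pottname n\beta)\cdot\Conf$, and $k=\pottname 3^n$, and the same sequence of applications of Lemma~\ref{lem:allpopulated}, Lemma~\ref{lem:big}, Lemma~\ref{lem:allempty}, and Lemma~\ref{lem:pumping2}, with the saturation and concentration thresholds checked in the same way.
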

\begin{proof}

Assume $\PP_n$ computes $x \geq \eta$. We first prove that $\eta \le a$ holds for $a:= \pottname n \beta 3^n$. It suffices to show that $a$ satisfies the conditions of Lemma \ref{lem:pumping2} for a suitable choice 
of $b$, $(\Basel, S)$, $\Confa$, $\Confb$, $\LargeC$, and $\pi$.

Lemma~\ref{lem:allpopulated} shows that a $1$-saturated configuration $\Conf$ with $\InC{3^n}\rightarrow\Conf$ exists. Choose $\LargeC:=\pottname n \beta\cdot\Conf$. Observe that $\LargeC$ is $(\pottname n \beta)$-saturated and reachable from $\InC{a}$.

Now choose  $(\Basel, S)$ and $\Confa$ as the configurations and basis element of Lemma \ref{lem:big}, where $k:=\pottname3^n$ and $\LargeC$ is chosen as above. With this choice of $k$ Lemma \ref{lem:big} yields:
\begin{enumerate}
\item\label{lem:big2:ca} $\InC{a} \transS \LargeC \transS \Basel + \Confa$; and
\item\label{lem:big2:cc} $\Basel + \Confa$ is $(1/\pottname)$-concentrated in $S$.
\end{enumerate}
(Regarding \ref{lem:big2:cc}., note that $(1/\pottname3^n)$-concentrated implies $(1/\pottname)$-concentrated.)
Property \ref{lem:big2:ca}.\ coincides with condition \ref{pumping1} of Lemma~\ref{lem:pumping2}. 
Finally, we choose $b$, $\Confb\in\N^S$, and $\pi$ so that condition \ref{pumping2} of Lemma~\ref{lem:pumping2} holds as well. In particular, $b$, $\Confb$, and $\pi$ must satisfy:
\begin{center}$\InC{b} \potrans{\pi} \Confb$ for some $\pi$ s.t.\ $\LargeC$ is $2 \size{\pi}$-saturated.\end{center}
\noindent By \ref{lem:big2:ca}.\ we have $\InC{a} \potransS \Basel + \Confa$.
Applying Lemma \ref{lem:allempty} with $i:=a$ and $\Conf := \Basel + \Confa$, we conclude that $\InC{j} \potrans{\theta} \Conf'$ for some $0<j \leq \pottname$, some multiset $\theta$ such that $\size{\theta} \leq \pottname/2$, and some configuration $\Conf'$ $0$-concentrated in $S$, i.e.\ satisfying $C'(Q\setminus S)=0$.

Using $b := j$, $\pi := \theta$, and $\Confb := \Conf'$, we have $\InC{b} \potrans{\pi} \Confb$, which fulfils condition \ref{pumping2} of Lemma~\ref{lem:pumping2} as well. Further, since
$\LargeC$ is $(n \beta \pottname)$-saturated and $n \beta \pottname \geq 2 (\pottname/2) \geq 2 \size{\pi}$, the configuration $\LargeC$ is $2\size{\pi}$-saturated. Applying the lemma we get $\eta \le a$, concluding the first part of our proof.

We now show that $\pottname n \beta 3^n \leq 2^{(2n+2)!}$. Since $\pottname = \pottcons$ and $\beta = 2^{2(2n+1)!+1}$, we
deduce (for $n\ge2$):
\begin{align*}
\eta&\le2\,(2\abs{T}+1)^n\,n\,2^{2(2n+1)!+1}\,3^n\\
&\le4\,(2n^4+1)^n\,n\,2^{2(2n+1)!}\,2^{2n}\\
&\le4\cdot3^nn^{4n}n2^{2(2n+1)!}\,2^{2n}\\
&\le4\cdot3^nn^{4n+1}2^{2(2n+1)!}\,2^{2n}
\end{align*}
\noindent and so
\begin{align*}
2^\eta&\le2+2n+(4n+1)\log_2n+2(2n+1)!+2n\\
&\le2(2n+1)!+(2n+1)(2+2\log_2n)\\
&\le(2n+2)!
\end{align*}
Therefore $\eta\le 2^{(2n+2)!}$.
\end{proof}





\section{Conclusion} \label{sec:conclusion}
We have obtained the first non-trivial lower bounds on the state complexity of population protocols computing counting predicates of the form $x \geq \eta$, a fundamental question about the model. The obvious open problems are to 
close the gap between 
the $\Omega(\log \log \eta)$ lower bound and the $\O(\log \eta)$ upper bound for the leaderless case, and the even larger gap between $\O(\log \log \eta)$ and (roughly speaking), the $\Omega(\alpha(\eta))$ lower bound for protocols with leaders, where $\alpha(\eta)$ is the inverse of the Ackermann function.

\bibliographystyle{plainurl}
\bibliography{references}

\begin{appendices}

\label{sec:appendix}

\end{appendices}

\end{document}